\newcommand{\FF}{\vspace*{\medskipamount}}
\newcommand{\BBB}{\vspace*{-\bigskipamount}}
\newcommand{\cE}{\mathcal{E}}
\newcommand{\cO}{\mathcal{O}}
\newcommand{\cS}{\mathcal{S}}
\newcommand{\Paragraph}[1]{\BBB\paragraph{#1}}
\newlength{\pagewidth}
\newlength{\captionwidth}
\newcommand{\qed}{\hfill $\square$ \smallbreak}
\newenvironment{proof}{\noindent{\bf Proof:}}{\qed}
\newtheorem{theorem}{Theorem}
\newtheorem{lemma}{Lemma}
\newtheorem{corollary}{Corollary}
\begin{document}

\baselineskip          	3ex
\parskip                	1ex

%: title page

\title{				Asynchronous Exclusive Selection \footnotemark[1]\vfill\vfill}

\author{			Bogdan S. Chlebus  	\footnotemark[2]   	
				\and
				Dariusz R. Kowalski 	\footnotemark[3]}

\footnotetext[1]{	A preliminary version of this paper appeared as~\cite{ChlebusK08}.}

\footnotetext[2]{		School of Computer and Cyber Sciences,
               			Augusta University,
               			Augusta, Georgia, USA.
Work supported by the National Science Foundation Grant No.~1016847.}

\footnotetext[3]{		School of Computer and Cyber Sciences,
               			Augusta University,
               			Augusta, Georgia, USA.
Work supported by the National Science Foundation Grant No.~2131538 and  the Polish National Science Center NCN grant UMO-2017/25/B/ST6/02553.}

\date{}

\maketitle

\vfill

\begin{abstract}

We consider the task of assigning unique integers to a group of processes in an asynchronous distributed system of a total of $n$ processes prone to crashes that communicate through shared read-write registers.
In the Renaming problem, an arbitrary group of $k\le n$ processes that hold the original names from a range $[N]=\{1,\ldots,N\}$, contend to acquire unique integers in a smaller range~$[M]$  as new names using some~$r$ auxiliary shared registers.
We develop a wait-free $(k,N)$-renaming solution, where both $k$ and $N$ are known, operating in $\cO(\log k (\log N + \log k\log^* \frac{N}{k}))$ local steps, for $M=\cO(k)$, and with $r=\cO(k\log\frac{N}{k})$ auxiliary registers.
We give a wait-free  $N$-renaming algorithm, where $N$ is known,  operating in $\cO(\log^2 k \,(\log N + \log k\log^\ast N))$ local steps, with $M=\cO(k)$ and with $r=\cO(n\log\frac{N}{n})$  registers.
We develop a wait-free $k$-renaming algorithm, where $k$ known, operating in $\cO(k)$ time, for $M=2k-1$ and with $r=\cO(k^2)$ registers.
We give an adaptive wait-free solution of Renaming, where neither $k$ nor $N$ is known, having  $M=8k-\lg k-1$ as a bound on the range of new names, which operates in~$\cO(k)$ local steps and uses $r=\cO(n^2)$  registers.
As a  lower bound, we show that a wait-free solution to Renaming requires $1+\min\{k-2,\lfloor\log_{2r} \frac{N}{M+k-1}\rfloor\}$ steps in the worst case.
We apply  renaming algorithms to obtain solutions to Store\&Collect problem, which is about a group of $k\le n$ processes with the original names in a range $[N]$ proposing individual values (operation \texttt{Store}) and returning a view of all proposed values (operation \texttt{Collect}), while using some $r$ auxiliary shared read-write registers.
We show that if a known $N$ is polynomial in~$n$, then storing can be performed in $\cO(\log^3 n \log^\ast n)$ local steps and collecting in $\cO(k)$ local steps with $\cO(n\log n)$ shared read-write registers. 
We consider a problem Mining-Names, in which processes may repeatedly request  positive integers as new names subject to the constraints that no integer can be assigned to different processes and the number of integers never acquired as names is finite in an infinite execution.
We give two solutions to Mining-Names in a distributed system in which there are infinitely many shared read-write registers available.
A non-blocking solution leaves at most $2n-2$ nonnegative integers never assigned as names, and a wait-free algorithm leaves at most $(n+2)(n-1)$ nonnegative integers never assigned as names.

\vfill

\noindent
\textbf{Keywords:} 
asynchrony, 
process crash,
read-write shared register, 
renaming, 
store and collect, 
non-blocking algorithm,
wait-free algorithm,
deterministic algorithm,
lower bound,
graph expansion.
\end{abstract}

\vfill

~

\thispagestyle{empty}

\setcounter{page}{0}
\newpage

\section{Introduction}

We consider asynchronous distributed systems consisting of some $n$ processes that are prone to crashes and use read-write registers for inter-process communication.
The studied problems concern assigning positive integers to the processes in an \emph{exclusive} fashion, which means that no integer is assigned to two distinct processes.
We seek wait-free algorithms, and sometimes consider non-blocking ones.

When an integer $i$ is assigned to a process~$p$ exclusively, then we say that $i$ is $p$'s  \emph{new name}. 
In the Renaming problem, some $k\le n$ processes, each having an original name from a large range $[N]=\{1,\ldots,N\}$, contend to acquire unique integers in a smaller range~$[M]$  as new names, using some $r$ shared registers.
An algorithm can have some of the parameters $k$ and $N$ as a part of code.
We indicate which parameters are known as a part of code by attaching the relevant parameters to problems' names and solutions.
For example, an algorithm solving $(k,N)$-renaming has both~$k$ and~$N$ as a part of code, while $M$ and $r$ and the time complexity are characteristics of the algorithm, given as functions of $k$ and $N$ and possibly also of~$n$.
Similarly, an algorithm solving $k$-renaming works for any range $[N]$ of the original names and for up to $k$ contending processes.
Finally, an $N$-renaming algorithm handles the original names in the range $[N]$ while the contention $k$ is arbitrary, except for the restriction~$k\le n$.
An adaptive renaming algorithm works for any contention~$k\le n$ and range $[N]$ of the original names, which are not parts of code, while the range of new names~$M$ and the time performance are  functions of~$k$, and the number of registers~$r$ is a function of~$n$, as this is the maximum value of $k$ when $k\le n$.

We restrict our attention to \emph{one-time} renaming problems in which processes that will contend to acquire new names are designated at the start of an execution, and no new name is ever released and reused. 
Time performance is measure by the number of \emph{local steps}, which is a maximum number of steps a process takes before halting in a final state.

In the problem Store\&Collect, some $k$ processes perform two operations \texttt{Store} and \texttt{Collect}.
The \texttt{Store} operation by a process~$p$ proposes a value, and \texttt{Collect} results in returning a \emph{view}, which is a collection of pairs $(p,v)$ where $p$ is the original name of a process that proposed a value~$v$ before the return of \texttt{Collect}, but not a stale one replaced before the invocation of \texttt{Collect}.
The semantics of this problem under asynchrony is well determined by referring to a collection of read-write registers, one register assigned to each process.
In order to propose a value, a process writes its original name and that value in its register.
In order to collect, a process reads once each register storing a pair consisting of a value proposed by a process and its name, in arbitrary order of registers, and includes each such a read pair in the view.

The problem Mining-Names is about a scenario in which processes repeatedly request new positive integers as names in an infinite execution.
There are two constraints on algorithms for the problem. 
One is that each process keeps an exclusive ownership of each acquired new name, to possibly build an infinite collection of new names in an infinite execution.
The other is to leave only finitely many positive integers never assigned as new names in an infinite execution. 
For a fixed integer~$i$, no wait-free solution to Mining-Names can guarantee that $i$ is eventually assigned as a new name.
It follows that some integers may never be used as new names when an algorithm works to assign as many positive integers as new names as possible.
We want to minimize the number of positive integers never assigned as names in an execution, so this number  is proposed as a measure of quality of a solution for Mining-Names.
The model of a distributed system we use to develop solutions for Mining-Names assumes finitely many processes but infinitely many shared read-write registers.

As a preparation to developing Mining-Names solutions, we show how  to implement a repository of infinitely many values using infinitely many  read-write registers. 
The values are generated in a dynamic fashion.
A value is considered as \emph{deposited} in a register when the value is stored in the  register and will never be overwritten.
In this problem, we strive to minimize the number of available registers  that never become used to store deposited values.
The problem is closely related to mining names, as new names can be used to identify registers to make deposits.

%: paragraph contributions of this paper

\Paragraph{Contributions of this paper.}

The renaming algorithms that we develop are designed to have processes traverse paths in graphs in which vertices represent names.
During such concurrent traversals, processes compete to acquire the names of the visited vertices.
We consider graphs with suitable expansion properties as means to makes algorithms efficient.
The approaches to renaming known in the literature, that have graphs built into algorithms in a similar manner, use graphs with  simple regular topologies, like constant-degree grids.

We develop a wait-free $(k,N)$-renaming algorithm operating in $\cO(\log k (\log N + \log k\log^* \frac{N}{k}))$ local steps, with a range of new names $M=\cO(k)$ and $r=\cO(k\log\frac{N}{k})$ auxiliary registers.
This is a first known deterministic algorithm with step complexity polylogarithmic in~$k$ and a range of new names $M$ linear in~$k$, for~$N$ that is polynomial in~$k$.

We show that  $1+\min\{k-2,\lfloor\log_{2r} \frac{N}{M+k-1}\rfloor\}$ local steps are required in the worst case by any wait-free $(k,N)$-renaming algorithm to assign names from a  range~$[M]$ when using~$r$ registers.
This is a first known lower bound on the local-step time performance of Renaming  that comprehensively involves all the four parameters $k$, $N$, $M$, and~$r$.
In particular, if $N$ is unknown, and hence could be arbitrarily large, while $M$ is suitably bounded as a function of~$k$, then $k-1$ is a lower bound; this resembles the lower bounds given by Jayanti et al.~\cite{JayantiTT00}.
An $\Omega(k)$ lower bound on time, valid under additional assumptions, was proven by Alistarh et al.~\cite{AlistarhACGG14}   by a different argument.

We develop a wait-free $N$-renaming algorithm with $\cO(\log^2 k \,(\log N + \log k\log^\ast N))$ local step complexity, for unknown contention $k$, with the range of new names 
$M=\cO(k)$ and the number of registers $r=\cO(n\log\frac{N}{n})$. 
If a known~$N$ is poly-logarithmic in~$n$, then this renaming algorithm runs in $\cO(\log^3 n \log^\ast n)$ local steps and uses $\cO(n\log n)$ auxiliary registers.

We develop a wait-free $k$-renaming algorithm operating in $\cO(k)$ local steps, with a bound on new names $M=2k-1$ and with $r=\cO(k^2)$ auxiliary registers.
The time complexity of this algorithm is asymptotically optimal, which follows from the lower bound we show and the property that the algorithm works for arbitrary~$N$.
This is the first algorithm known that has simultaneously two properties: the local step complexity is $\cO(k)$ and also the range of new names is~$M=\cO(k)$.
Among the previously known algorithms that run in time $\cO(k)$, the value $M=\frac{k(k+1)}{2}$ was smallest known; it is achieved by an algorithm of Moir and Anderson~\cite{MoirA95}.
The fastest algorithm known before, among those having $M=\cO(k)$, operates in $\cO(k\log k)$ time, it was given by Attiya and Fouren~\cite{AttiyaF01}.
The value $M=2k-1$ is known to be the best possible size of a range of names for infinitely many values of~$k$, as showed by Herlihy and Shavit~\cite{HerlihyS99}. 
The fastest algorithm known prior to this work with $M=2k-1$ as a bound on the range of new names runs in time $\cO(k^2)$, it was given by Afek and Merritt~\cite{AfekM99}.

We give a fully adaptive  renaming algorithm, with neither $k$ nor $N$ known, having a bound on the range of new names~$M$ as small as $8k-\lg k-1$.
The algorithm operates in~$\cO(k)$ local steps and uses $\cO(n^2)$ auxiliary registers.
This is an improvement with respect to time performance over the previously known algorithms.
The algorithm of Moir and Anderson~\cite{MoirA95} works in time $\cO(k)$ with a range of new names $M=\cO(k^2)$ and using $\cO(n^2)$ registers.
The algorithm of Attiya and Fouren~\cite{AttiyaF01} operates in $\cO(k\log k)$ time with a range $M=\cO(k)$.
The algorithm of Afek and Merritt~\cite{AfekM99} runs in time $\cO(k^2)$ with a range of new names~$M=2k-1$.

We apply  renaming algorithms to obtain solutions to Store\&Collect of the following performing characteristics.
When both parameters $k$ and $N$ are known, then Store\&Collect can be implemented such that the first storing operation by a process takes  $\cO(\log k (\log N + \log k\log^* \frac{N}{k}))$ steps and collecting $\cO(k)$ steps, while using $\cO(k\log\frac{N}{k})$ auxiliary registers.
If $N$ is known but $k$ is not, then Store\&Collect can be implemented such that the first instance of storing takes $\cO(\log^2 k (\log N + \log k\log^\ast N))$ local steps and collecting $\cO(k)$ steps, with $\cO(n\log\frac{N}{n})$ auxiliary registers.
If the number of participating processes~$k$ is known but the range of original names~$N$ is not, then Store\&Collect can be implemented such that the first instance of storing takes $\cO(k)$ local steps and collecting $\cO(k)$ steps, with $\cO(k^2)$ auxiliary registers.
If $k$ and $N$ are both unknown, which is the adaptive case, then Store\&Collect can be implemented such that storing takes $\cO(k)$ steps and collecting takes $\cO(k)$ steps, with $\cO(n^2)$ auxiliary registers.
Afek and De Levie~\cite{AfekL07} gave an adaptive solution to Store\&Collect achieving storing in $\cO(k)$ local steps and collecting in $\cO(k)$ local steps, for~$r=\cO(n^2)$.

We consider the problem called Mining-Names, which is about processes working to claim nonnegative integers  as names in a mutually exclusive manner.
We show that Mining-Names is solvable in a non-blocking way such that at most $2n-2$ integers are never assigned as names, which is asymptotically best possible, and in a wait-free manner so that at most $(n+2)(n-1)$ values are never assigned as names.
The problem Mining-Names  has not been considered  before in the literature, by the knowledge of the authors of the paper.
Name mining  is related to ``depositing'' infinitely many values in read-write registers, where depositing means storing a value in a register such that it is never overwritten.
We give a non-blocking implementation of depositing in which at most $2n-2$ dedicated read-write registers are never used for depositing, and a wait-free implementation with the property that at most $(n+2)(n-1)$ dedicated deposit registers are never used for depositing.

%: paragraph related work

\Paragraph{Related work.}

Now we describe the context of this work by reviewing research related to renaming.
We restrict our attention to asynchronous systems with shared memory consisting of read-write registers only; for a comprehensive survey of renaming see Alistarh~\cite{Alistarh15}.

The problem of renaming was introduced by Attiya et al.~\cite{AttiyaBDPR90} in the model of asynchronous message-passing. 
They showed that $n$ processes may assign themselves new names from the range $[n+f]$, where $f<n$ is an upper bound on the number of crashes.
This established renaming as a non-trivial algorithmic problem with a wait-free solution for  environments in which Consensus cannot be solved; see~\cite{Attiya-Welch-book2004,HerlihyKozlovRajsbaum-book,Lynch-book96}.
The range of new names $[M=n+f]$, with up to $f$ crashes, was shown to be the smallest possible for renaming to be solvable by Herlihy and Shavit~\cite{HerlihyS99}.
Next we consider a scenario win which $k\leq n$ contending processes with original names in a large range~$[N]$ want to obtain new names in a small range~$[M]$.
Borowsky and Gafni~\cite{BorowskyG92} gave a wait-free algorithm solving this problem in  time $\cO(k^2N)$ for~$M=2k-1$.
Moir and Anderson~\cite{MoirA95} gave a solution with time complexity $\cO(k)$, for new names of magnitude $M=\frac{k(k+1)}{2}$ and using $r=\cO(k^2)$ registers. 
Attiya and Fouren~\cite{AttiyaF01} gave an algorithm working in time $\cO(k\log k)$ for $M=6k-1$, and another of time complexity $\cO(N)$ for $M=2k-1$. 
Afek and Merritt~\cite{AfekM99} developed an algorithm working in time $\cO(k^2)$ for $M=2k-1$.

A  renaming solution is \emph{long-lived} when processes may invoke the operations to request a name and to release the current name repeatedly, as long as exclusiveness of each name holds within the interval from acquiring to releasing.
It is assumed that at most $k$ processes contend for names concurrently.
The following is a selection of published long-lived renaming algorithms.
Burns and Peterson~\cite{BurnsP89} gave a solution of time complexity  $\cO(Nk^2)$, for $M=2k-1$ and $r=\cO(N^2)$.
Moir and Anderson~\cite{MoirA95}  improved the time to $\cO(Nk)$, for $M=\frac{k(k+1)}{2}$ and $r=\cO(Nk^2)$.
Further improvements were due to Buhrman et al.~\cite{BuhrmanGHM95}, who achieved $\cO(k^3)$ time, for $M=\frac{k(k+1)}{2}$ and the  number of registers $r=\cO(k^4\min\{3^k,N\})$, and to Moir and Garay~\cite{MoirG96} whose algorithm achieved $\cO(k^2)$ time complexity, for $M=\frac{k(k+1)}{2}$ and $r=\cO(k^3)$ registers, and who also gave another solution with $\cO(k^4)$ time, for $M=2k-1$ and $r=\cO(k^4)$.	

For other work on renaming, see the papers by Afek et al.~\cite{AfekAFST99,AfekBT00,AfekST02}, Brodsky et al. \cite{BrodskyEW06}, and Eberly et al.~\cite{EberlyHW98}.
Randomized renaming algorithms were considered by Alistarh et al.~\cite{AlistarhACGZ11, AlistarhAGW13, AlistarhAGGG10}.
Lower bounds on renaming were given by Alistarh et al.~\cite{AlistarhACGG14}, Attiya et al.~\cite{AttiyaCHP19,AttiyaP16,AttiyaR02}, Burns and Peterson~\cite{BurnsP89}, Casta\~neda et al.~\cite{CastanedaHR14}, Casta\~neda and Rajsbaum~\cite{CastanedaR10,CastanedaR12}, and Helmi et al.~\cite{HelmiHW14}.

Previous work on the problem Store\&Collect includes papers by Afek et al.~\cite{AfekST99}, Attiya et al.~\cite{AttiyaFK04,AttiyaF03,AttiyaFG02,AttiyaKPWW06}, Chlebus et al.~\cite{ChlebusKS-STOC04} and Saks et al.~\cite{SaksSW91}.
Previous work on models with infinite arrivals of processes and infinitely many shared registers includes~\cite{Aguilera04,AspnesSS02,ChocklerM05,GafniMT01,MerrittT03,MerrittT13}.

\section{Technical Preliminaries}

\label{sec:technical-preliminaries}

Algorithms are executed in an asynchronous system with $n$ processes prone to crashes and a set of read-write registers.
Each process~$p$ is identified by its \emph{original name} \texttt{name}$_p$, which is a unique number in some range of names $[N]=\{1,\ldots,N\}$.

If a parameter of a distributed system can be used in a code of algorithm then this parameter is \emph{known}.
In particular, each process~$p$ knows its original name, which is referred to by a specialized variable in codes of algorithms, say \texttt{name}, with process~$p$ substituting \texttt{name}$_p$ as its private value.
We assume throughout that the number $n$ is known.

The following is a standard terminology regarding delays of enabled operations; 
see~\cite{Attiya-Welch-book2004,HerlihyKozlovRajsbaum-book,Lynch-book96} for expositions of these and related concepts.
When, for any configuration in an execution, some process will eventually complete an invoked operation, then the executed algorithm is \emph{non-blocking.}
When, for any configuration in an execution, each process will eventually complete an invoked operation, even when all the remaining processes have crashed, then the algorithm is \emph{wait-free}.

%: paragraph competing for registers

\Paragraph{Competing for registers.}

We introduce a procedure used by processes to compete for a shared register.
The procedure  has two two properties.
First is that a lack of contention guarantees wining.
This means that if there is exactly one process~$p$ working to win a register~$R$, then $p$ eventually wins~$R$.
The second property is that a win provides exclusivity.
This means that once some contender wins a register~$R$, then no other contender will ever win~$R$.
This specification does not require a register to be won by a process when there are multiple contenders but also does not preclude this.

To implement a competition for a register~$R$, we use an auxiliary dedicated shared register~$H_R$ initialized to \texttt{null}.
This register~$H_R$ is used as a placeholder to store a reservation for~$R$.
A pseudocode of a procedure for $p$ is given in Figure~\ref{fig:procedure-competition}.

%: figure procedure Verify Collision

\begin{figure}[t]
\hrule

\FF

\textsf{procedure} \textsc{Compete-For-Register\,$(R)$}

\FF

\hrule

\FF

\begin{enumerate}[nosep]
\item
read: $\texttt{contention}_p\leftarrow R$

\item
\texttt{if} $\texttt{contention}_p = \texttt{null}$ \texttt{then} write $R\leftarrow \texttt{name}_p$ \texttt{else} \texttt{exit}

\item
read: $\texttt{contention}_p\leftarrow H_R$

\item
\texttt{if} $\texttt{contention}_p = \texttt{null}$ \texttt{then} write $H_R\leftarrow    \texttt{name}_p$ \texttt{else} \texttt{exit}

\item
read: $\texttt{contention}_p\leftarrow R$

\item
\texttt{if} $\texttt{contention}_p = \texttt{name}_p$ \texttt{then return win else exit}

\end{enumerate}

\FF

\hrule

\FF

\caption{\label{fig:procedure-competition}
Pseudocode for a process~$p$ with name \texttt{name}$_p$ to win a shared register $R$. 
It uses a private variable~$\texttt{contention}_p$ and a shared register $H_R$ associated with $R$. Both registers $R$ and $H_R$ are initialized to \texttt{null}.}
\end{figure}

%: lemma competition for register

\begin{lemma}
\label{lem:compete-for}

Procedure \textsc{Compete-For-Register$()$} is an implementation of competition for a register.
\end{lemma}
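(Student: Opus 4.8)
The plan is to verify the two required properties directly from the six-line pseudocode in Figure~\ref{fig:procedure-competition}. The register $H_R$ acts as a single-writer-style reservation slot: a process only advances past the first conditional if it found $H_R=\texttt{null}$ and then wrote its own id there, and it only declares \texttt{win} if a later re-read of $H_R$ still returns its own id. The key observation is that $R$ itself is essentially a decoy; the real arbitration happens through $H_R$, and the middle two lines (touching $R$) are there to make the interface uniform with how $R$ is used elsewhere, but for this lemma what matters is the behavior of $H_R$.

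First I would handle \emph{exclusiveness}. Suppose some contender $p$ wins $R$. Then at line~6 $p$ read $H_R=p$, which means that between $p$'s write $H_R\leftarrow p$ (line~2) and this read, no other process overwrote $H_R$. I claim no process ever overwrites $H_R$ after $p$'s write: any other process $q$ writes $H_R$ only at its own line~2, and only after reading $\texttt{contention}_q=\texttt{null}$ from $H_R$ at its line~1; but once $p$ has written $H_R\leftarrow p$, the value of $H_R$ is non-\texttt{null} forever, since $p$ is the only writer that ever wrote it and no code path writes \texttt{null} back. Hence after $p$'s line-2 write, every other $q$ reading $H_R$ at its line~1 sees $p\neq\texttt{null}$ and exits without writing. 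Therefore $H_R$ stays equal to $p$ permanently, so any other contender $q$ reaching its own line~6 reads $H_R=p\neq q$ and exits rather than winning. This gives the exclusiveness property. (It also shows $p$ itself, if it ran the procedure again, would not win a second time, but that is not needed.)

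Next I would handle \emph{progress under no contention}. If $p$ is the only process ever working to win $R$, then $H_R$ and $R$ start as \texttt{null} and no one else touches them, so $p$'s line-1 read yields $\texttt{null}$, $p$ writes $H_R\leftarrow p$, $p$'s line-3 read of $R$ yields $\texttt{null}$, $p$ writes $R\leftarrow p$, and $p$'s line-5 read of $H_R$ yields $p$; hence $p$ executes \texttt{win}. Since each step is a single register operation and $p$ is wait-free in the trivial sense here, $p$ wins after finitely many of its own steps. Assembling these two parts establishes that the procedure meets the specification of a competition for a register.

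**Expected main obstacle.** The only subtlety — and it is where I would be careful rather than where real difficulty lies — is the interleaving argument in the exclusiveness proof: I must argue that \emph{once} $p$ performs its line-2 write, $H_R$ can never change again, which requires checking that the \emph{only} writes to $H_R$ anywhere in the pseudocode are the line-2 writes guarded by a \texttt{null}-read, and that no execution path writes \texttt{null} (or any other value) to $H_R$. This is immediate from inspecting the code, but it is the linchpin: it converts a potentially messy concurrent argument into the simple monotonicity statement ``$H_R$ is write-once.'' Everything else is a direct reading of the control flow.
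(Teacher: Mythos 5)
Your no-contention argument is fine, but the exclusiveness argument has a genuine gap, and it sits exactly at the step you call the linchpin. The claim ``no process ever overwrites $H_R$ after $p$'s write'' does not follow from the code: a contender $q$ may read $H_R=\texttt{null}$ at its line~1 \emph{before} $p$ writes to $H_R$, and then perform its line-2 write $H_R\leftarrow q$ \emph{after} $p$'s write (even after $p$ has already won). The read at line~1 and the write at line~2 are two separate atomic steps, so $p$'s write can be interleaved between them; the guard ``$q$ writes only after reading \texttt{null}'' therefore does not make $H_R$ write-once. Consequently your conclusion that $H_R$ stays equal to $p$ forever, and hence that $q$ would read $p$ at its line~6, is unsupported.

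What actually blocks such a $q$ is the register $R$, which you dismiss as a decoy --- it is in fact essential. If the procedure consisted only of the $H_R$ lines, two contenders could both win: $p$ reads $H_R=\texttt{null}$, $q$ reads $H_R=\texttt{null}$, $q$ writes $q$, re-reads $q$ and wins, then $p$ writes $p$, re-reads $p$ and wins. The paper's proof uses $R$ at precisely this point: since $p$ wins, $p$'s final read of $H_R$ returns $p$, so $q$'s write to $H_R$ (issued after a \texttt{null} read that preceded $p$'s write) can only take effect after $p$'s final read of $H_R$, hence after $p$ has already written $p$ into $R$; therefore $q$'s read of $R$ at its line~3 returns a non-\texttt{null} value and $q$ exits at line~4, never reaching line~6. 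Your proof never invokes lines~3--4, so it cannot be repaired without restoring the role of $R$ in the interleaving argument.
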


\begin{proof} 
To show correctness, consider two cases corresponding to the specification during a competition for a register~$R$.
If a process~$p$ acts as the only contender to win~$R$, then $p$  eventually writes the value $\texttt{name}_p$ to both registers $H_R$ and $R$, and the final  read from $R$ makes process~$p$ a winner.
Now suppose some process~$p$ wins~$R$ in the presence of a contender process~$q$.
If the first read of~$R$ by~$q$ does not return \texttt{null}  then $q$ exits immediately and no longer contends to win~$R$. 
Otherwise, process~$q$ reads \texttt{null} from~$R$ as its first action, which means that  process~$q$ managed to read from register~$R$ before process~$p$ wrote to~$R$.
Still process~$p$ wins $R$, so $p$ managed to write \texttt{name}$_p$ to~$R$ and then the same value to~$H_R$ and then check that \texttt{name}$_p$ is still in register~$R$, as all this is required to win~$R$.
When process~$p$ confirms by the second read of~$R$ that $\texttt{name}_p$ is still there then $H_R$ already stored $\texttt{name}_p$.
So process~$q$ could overwrite the value \texttt{name}$_p$ at register~$R$ only after process~$p$ read it for the second time, which means after register~$R$ has the value \texttt{name}$_p$ stored in it already.
When process~$q$ reads register~$H_R$ then this occurs after process~$p$ wrote $\texttt{name}_p$ to it, so the read returns $\texttt{name}_p$, which is different from \texttt{null}. 
This makes process~$q$ exit without invoking a write to~$H_R$.
\end{proof}

%: paragraph graphs

\Paragraph{Graphs.}

Let $G=(V,W,E)$ be a simple bipartite graph.
This notation means that the vertices are partitioned into the set of \emph{inputs~$V$}  and the set of \emph{outputs~$W$}, $E$ is the set of edges, and each edge has one endpoint in $V$ and the other in~$W$.
We say that graph $G$ has \emph{input-degree $\Delta$} if each vertex in~$V$ is connected to exactly $\Delta$ neighbors in~$W$. 
A graph~$G$ is said to be an \emph{$(L,\Delta,\varepsilon)$-lossless expander}, for a natural number $L$, if $\Delta$ is the input-degree of $G$ and each subset $X$ of $V$ of size $|X|\le L$ has at least $(1-\varepsilon)\Delta\cdot |X|$ neighbors in~$W$.
A vertex $v\in W$ is a \emph{unique neighbor of set $S\subseteq V$} if vertex~$v$ is adjacent to exactly one vertex in~$S$.

%: lemma from CapalboRVW02 

\begin{lemma}[\cite{CapalboRVW02}]
\label{lem:from-CapalboRVW02}

Let $G$ be a $(L,\Delta,\varepsilon)$-lossless expander, for some parameters $L$ and $\varepsilon<\frac{1}{2}$.
Then, for each subset $X\subseteq V$ of size $|X|\le L$, at least the fraction $(1- 2\varepsilon )\Delta$ of vertices among the neighbors of~$X$ are unique neighbors.
\end{lemma}

\begin{proof}
Fix an ordering of the sets of vertices~$W$.
For $v\in V$, let $(v,i)$ denote the $i$th neighbor of~$v$ in~$W$.
There are $\Delta |X|$ pairs of the form $(v,i)$, for $v\in X$ and $1\le i\le \Delta$.
By the definition of lossless expanders, all these pairs determine at least $(1-\varepsilon)\Delta |X|$ neighbors of $X$.
It follows that at most $\varepsilon\Delta |X|$ such pairs denote vertices repeated at least twice. 
Each such a repetition $(v_1,i)=(v_2,j)$ eliminates two possible unique neighbors of $X$ per one real neighbor $(v_1,i)=(v_2,j)$, in the total number $\Delta |X|$ of pairs of the form $(v,i)$, for $v\in X$ and $1\le i\le \Delta$.
\end{proof}

%: lemma partial matching

\begin{lemma}
\label{lem:partial-matching}

If a bipartite graph $G=(V,W,E)$ is an $(L,\Delta,\varepsilon)$-lossless expander, for some numbers~$L$ and $\varepsilon<\frac{1}{2}$, then, for each set $X\subseteq V$ of size $|X|\le L$, there is a partial matching in~$G$, between some vertices in $X$ and  unique neighbors of~$X$, that has at least $(1-2\varepsilon) |X|$ edges.
\end{lemma}

\begin{proof}
By Lemma~\ref{lem:from-CapalboRVW02}, a fraction of at least $(1- 2\varepsilon )\Delta$ vertices among the neighbors of~$X$ are unique neighbors.
We can match these inputs to their unique neighbors.
\end{proof}

Let $\lg z$ denote the logarithm of $z$ to the base~$2$, and $e$ be the base of natural logarithms.
We will use the existence of lossless expanders with the following properties:

%: lemma on existence of expanders

\begin{lemma}
\label{lem:expander}

Let  $V$ and $W$ be two finite disjoint sets and $L$ a natural number such that $1\le L \le \frac{|V|}{2}$.
If $|W|=12e^4 L\lg\frac{|V|}{L}$ then there exists a bipartite graph $G=(V,W,E)$ of input-degree $\Delta$ such that $4\le \Delta\le 4\lg |V|$ and $G$ is a $(L,\Delta,\frac{1}{4})$-lossless expander.
\end{lemma}

\begin{proof} 
We show that a set of edges between the vertices in $V$ and $W$ selected randomly subject to degree constraints meets the requirements with a positive probability.
More precisely, for each vertex $v\in V$, we select  $\Delta$ neighbors in~$W$ uniformly at random.
Next, we demonstrate that the resulting graph is  a $(L,\Delta,\frac{1}{4})$-lossless-expander with a probability greater than~$0$, where a specific value of $\Delta$ will be determined later.

Let  $X\subseteq V$ be  a subset of $x=|X|\le L$ elements, and $Y\subseteq W$ be a subset of $\frac{3}{4}x\Delta$ elements.
The probability that all the neighbors of $X$ are in the set $Y$ is at most
\[
\biggl(\frac{\binom{|Y|}{\Delta}}{\binom{|W|}{\Delta}}\biggr)^{x} \le
\biggl(\frac{|Y|e}{|W|}\biggr)^{x\Delta} \le
\biggl(\frac{\frac{3e}{4}\cdot x\Delta}{|W|}\biggr)^{x\Delta} 
\ ,
\]
where we used the bounds $(\frac{n}{n})^k\le \binom{n}{k} \le(\frac{en}{k})^k$.
The number of different subsets $X\subseteq V$ of size $x$ is at most
\[
\binom{|V|}{x} \le \biggl(\frac{|V|e}{x}\biggr)^x
\ .
\]
The number of different subsets $Y\subseteq W$ of size $\frac{3}{4}x\Delta$ is at most
\[
\binom{|W|}{\frac{3}{4}x\Delta} \le 
\biggl(\frac{|W|e}{\frac{3}{4}x\Delta}\biggr)^{3x\Delta/4}
\ .
\]
Therefore, the probability that there exists a set $X\subseteq V$ with at most $\frac{3}{4}x\Delta$ neighbors, for a given size~$x$,  is at most 
\begin{equation}
\label{eqn:expander-exist-one}
\biggl(\frac{|V|e}{x}\biggr)^x \cdot 
\biggl(\frac{|W|e}{\frac{3}{4}x\Delta}\biggr)^{3x\Delta/4}
\cdot \biggl(\frac{\frac{3e}{4}x\Delta}{|W|}\biggr)^{x\Delta}
\le
\biggl(\frac{|V|e}{x}\biggr)^x \cdot 
\biggl(\frac{\frac{3e^3}{4}x\Delta}{|W|}\biggr)^{x\Delta/4}
\ .
\end{equation}
We assumed $|W|=12e^4 L\lg\frac{|V|}{L}$.
Now, define $\Delta=4\lg\frac{|V|}{L}$.
Observe that $\Delta\ge 4$, because $|V|\ge 2L$. 
The assumption on $W$ and the specification fo $\Delta$ produce cancellations that give:
\[
\frac{3e^4}{4} \cdot \frac{x\Delta}{W}= \frac{3e^4}{4} \cdot \frac{x4\lg\frac{|V|}{L}}{12e^4 L\lg\frac{|V|}{L}} =\frac{x}{4L}
\ .
\]
To estimate~\eqref{eqn:expander-exist-one} further, observe that the quantity raised to the power~$x$ in~\eqref{eqn:expander-exist-one} equals:
\begin{equation}
\label{eqn:expander-exist-two}
\frac{|V|}{L}\cdot \frac{L}{x} \cdot 
\biggl(\frac{3e^4x\Delta}{4 |W|}\biggr)^{\Delta/4}\\
=
\frac{|V|}{L}\cdot\frac{L}{x} \cdot
\biggl(\frac{1}{4}\cdot \frac{x}{L}\biggr)^{\lg\frac{|V|}{L}}
\ .
\end{equation}
Let us introduce the notation $\alpha=\frac{|V|}{L}$ and $\beta= \frac{L}{x}$.
By the assumptions, we have $\alpha\ge 2$ and $\beta\ge 1$.
The right-hand side of~\eqref{eqn:expander-exist-two} can be represented as
\[
\frac{\alpha \,\beta}{(4\beta)^{\lg \alpha}} = \frac{\beta}{\alpha\, \beta^{\lg \alpha}}
\ .
\]
This quantity is at most $\frac{1}{2}$, for $\alpha \ge 2$ and $\beta\ge 1$.
It follows that the right-hand side of~\eqref{eqn:expander-exist-one} is at most $\frac{1}{2^x}$.

We demonstrated that the probability that some subset $X \subseteq V$ of size $x=|X|\le L$ has at most $\frac{3}{4}x\Delta$ neighbors in~$W$ is at most $\frac{1}{2^x}$.
The probability that some  subset $X\subseteq V$ of size $x=|X|\le L$ has at most $\frac{3}{4}|X|\Delta$ neighbors in~$W$ is at most $\sum_{x=1}^{L} \frac{1}{2^x} < 1$.
By the probabilistic method, there exists a bipartite graph $G=(V,W,E)$ with input-degree $\Delta=4\lg\frac{|V|}{L}$ in which every subset $X\subseteq V$ of size $|X|\le L$ has more than $\frac{3}{4}|X|\Delta$ neighbors in~$W$, where the number $L$ satisfies $L\le \frac{|V|}{2}$.
\end{proof}

\section{Bounded Selection}

We consider the problem of assigning new names to a group of participating processes, from among the total of $n$, known as \emph{renaming}.
Each among $n$ processes holds an \emph{original} name from some range $[N]=\{1,\ldots,N\}$, which is the value of its variable \texttt{name}$_p$.
The number of participating processes is denoted as $k$, where $k\le n$.
These participating processes contend to acquire unique integers in a range~$[M]$  as new names using some~$r$ auxiliary shared registers, where $M<N$. 
The goal is to minimize a number of parameters: the running time and a range $[M]$ of new names, but also the number of auxiliary registers $r$.
Running time means local steps, which is a maximum among the processes of the number of time-steps counted by each process until halting. 

We assume that $n$ is known, but whether $k$ and $N$ are known depends on a precise specification of the renaming problem.
This leads to four variants of the problem of renaming, where either (1)~both $k$ and $N$ are known, or (2)~only $k$ is known, or (3~ only $N$ is known, and finally (4)~with none of $k$ and $N$ known.
The case of both $k$ and $N$ unknown is most challenging.
Algorithms for renaming that do not refer to either $k$ nor $N$ in their codes, and so work for arbitrary unknown values of $k$ and $N$, are called \emph{adaptive}.
We apply the convention to add known parameters among $k$ and $N$ to names of algorithms, so that if a parameter  is missing in the name then this means the parameter is unknown.

Our first goal is to develop a renaming algorithm with both $k$ and $N$ known.
We begin by introducing an auxiliary problem called Majority-Renaming, which is about assigning new names to at least half of some $k$ contending processes.
An algorithm is \emph{$(k,N)$-majority renaming with a bound~$M$ on new names} if at least half of any $k$ contending processes with original names in~$[N]$ acquire unique names in~$[M]$, while the numbers $k$ and~$N$ can be part of code of the algorithm.
We find a solution for Majority-Renaming based on lossless expanders with good unique-neighbors properties.
This becomes a stepping stone to develop solutions for Renaming itself.
Employing a renaming algorithm that relies on some known information, we then argue how to obtain an adaptive solution.
Finally, we discuss how to use the obtained renaming algorithms to solve Store\&Collect.

We begin by presenting an algorithm called \textsc{Majority}$(\ell,N)$, where $N$ is a natural number and $\ell\le \frac{N}{2}$, which is $(\ell,N)$-majority renaming.
The number $M=12e^4\ell\lg\frac{N}{\ell}$ serves as a bound on the range of new names.
We consider a bipartite graph $G=(V,W,E)$, where $V=[N]$, $W=[M]$, and each input-degree is~$\Delta$.
The topology of $G$ is such that $G$ satisfies the properties given in Lemma~\ref{lem:expander}.
The graph $G$ is part of code of the algorithm.
The set $V$ of inputs corresponds to all  original names of processes.
Each output vertex in the set~$W$ represents a possible new name. 

The edges of $G$ determine which registers will be competed for by the processes, using procedure \textsc{Compete-For-Register} given in Figure~\ref{fig:procedure-competition}.
To facilitate this, there are two unique shared registers associated with each output vertex.

Competition to win registers in an execution of algorithm \textsc{Majority}$(\ell,N)$ proceeds as follows.
A process~$p$ with a name in $V=[N]$ uses a vertex  in~$V$ labeled $\texttt{name}_p$.
It begins by attempting to win a register corresponding to the first neighbor of $\texttt{name}_p$ in~$W$.
If $p$ fails then it attempts to win the register of its second neighbor in~$W$.
This continues until $p$ either wins a register or exhausts all the neighbors. 
As soon as $p$ wins a register in~$W$ then $p$ adopts the number of the won register as its new name and exits.
If $p$ fails all $\Delta$ competitions, then $p$ halts without acquiring a new name.

%: lemma majority renaming

\begin{lemma}
\label{lem:majority}

Algorithm \textsc{Majority}$(\ell,N)$,  for a natural number $N$ and $\ell\le \frac{N}{2}$, is $(\ell,N)$-majority-renaming, where $M=12e^4\ell\lg\frac{N}{\ell}$ is a  bound on new names.
The algorithm operates in~$\cO(\log N)$ local steps and uses $24e^4\ell\lg\frac{N}{\ell}$ auxiliary registers.
\end{lemma}

\begin{proof} 
The graph~$G$ is a $(\ell,\Delta,\frac{1}{4})$-lossless-expander, as stated in Lemma~\ref{lem:expander}.
A majority of contending processes have unique neighbors not shared with other active processes, by Lemma~\ref{lem:partial-matching} applied to graph~$G$.
If an active process has a unique neighbor then it eventually wins some register representing its neighbor, by Lemma~\ref{lem:compete-for}. 
Hence a majority of active processes get unique names.
The worst-case running time is proportional to the degree $\Delta$ of graph~$G$, which is $\cO(\log N)$. 
The number of used registers is~$2M$, as we use two unique shared registers per one vertex in~$W$. 
\end{proof}

%: paragraph renaming when both k and N are known

\Paragraph{Renaming when both $k$ and $N$ are known.}

Next, we consider an algorithm \textsc{Plain-Rename}$(k,N)$, which is  $(k,N)$-renaming with $M=24e^4k\lg\frac{N}{k}$ as a bound on new names.
A process~$p\in [N]$ proceeds through consecutive stages,  up to $1+\lg k $ stages maximum,  until it gets a unique name. 
In a stage~$i$, where $0\le i\le \lg k$, process~$p$ executes \textsc{Majority}$(\frac{k}{2^i},N)$ on the set of pairs of registers~$M_i$, where $|M_i|=12e^4 \frac{k}{2^i}\lg\frac{N2^i}{k}$.
We assume that the sets $M_i$ are mutually disjoint.
A union of these sets~$\cup_{i=1}^{1+\lg k }M_i$ constitutes a collection of new names.

%: lemma algorithm plain rename

\begin{lemma}
\label{lem:Plain-Rename}

Algorithm \textsc{Plain-Rename}$(k,N)$ is $(k,N)$-renaming with $M=24e^4k\lg\frac{2N}{k}$ as a bound on new names.
It  operates in~$\cO(\log k\log N)$ local steps and uses $48e^4k\lg\frac{2N}{k}$ auxiliary registers. 
\end{lemma}

\begin{proof}
Procedure \textsc{Majority}$(\frac{k}{2^i},N)$ is majority renaming, by Lemma~\ref{lem:majority}.
Each call of this procedure at least halves the number of contending processes that still need names.
Calls of the procedure continue until there remain at most one process without a new name, which then eventually also gets a name. 
This takes $\cO(\log k \cdot \log N)$ local steps in total,  by Lemma~\ref{lem:majority}.
Number $1+\lg k $ is a bound on the number of stages.
The size of a pool of new names can be estimated as follows: 
\[
M=\sum_{i=0}^{\lg k} 12e^4\frac{k}{2^i} \lg\frac{2^i N}{k} 
= 12e^4 k  \Bigl( \sum_{i=1}^{\lg k} \frac{i}{2^i}+\sum_{i=0}^{\lg k}  \frac{1}{2^i}\lg\frac{N}{k} \Bigr)
=12e^4 k \bigl(2 + 2\lg\frac{N}{k} \bigr)
= 24e^4k\log\frac{2N}{k}
\ .
\]
The number of  shared registers needed for this to work correctly is $2M$.  
\end{proof}

Next, we consider an $(k,N)$-renaming algorithm called \textsc{Compact-Rename}$(k,N)$. 
It is an improvement over \textsc{Plain-Raname} in that the range of new names is~$\cO(k)$.

An execution of \textsc{Compact-Rename}$(k,N)$ is structured as a sequence of epochs.
A process~$p\in [N]$ participates in consecutive epochs, in each one getting a new name.
At least one epoch is executed as long as $N\ge 2^{15}k$, otherwise the original names serve as new names without any change.
The original names are used in the first epoch, and then the names acquired in epoch $i$ are used as original names in epoch $i+1$.  
This continues until the upper bound on the range of the new names, determined by the properties of a current epoch, becomes less than~$2^{15}k$.
A process acquires its ultimate name during this last epoch.

In epoch~$j$, process~$p$ executes \textsc{Plain-Rename}$(k,N_j)$, where $N_1=N$ and $N_{j+1}=24e^4k\lg\frac{2N_j}{k}$ are bounds on new names in calls of \textsc{Plain-Rename}$(k,N_j)$, for $j\ge 1$.
The executions of instantiations of algorithm \textsc{Plain-Rename} use sets of shared registers dedicated for each epoch, such that a shared register is used in only one epoch.
Processes use names from the range $[N_j]$ in epoch $j$,  with $N_1=N$.
The names assigned in epoch~$j$ are from the range $[N_{j+1}]$.

The algorithm \textsc{Compact-Rename}$(k,N)$ terminates because the ranges of new names shrink with passing epochs: $N_{j+1}<N_j$, for $j>0$.
We evaluate the rate of shrinking next.

%: lemma shrinking rate

\begin{lemma}
\label{lem:shrinking-rate}

If $N\ge 2^{15}k$ then $\frac{N_{2}}{N_1}< \frac{2}{3}$, and as long as $N_{j-1}\ge 2^{15}\cdot k$ then $\frac{N_{j+1}}{N_j}<\frac{24}{25}$,  for $j> 1$.
\end{lemma}

\begin{proof}
The case of $j=1$:
\[
\frac{N_2}{N_1}
= \frac{24\,e^4\,k\lg\frac{2N}{k}}{N}
\le 
\frac{24\,e^4\,k\lg\frac{2^{16}k}{k}}{2^{15} k}
=
\frac{24\, e^4 \, 16}{2^{15}}
<\frac{2}{3}
\ .
\]
The case of $j>1$:
\[
\frac{N_{j+1}}{N_j }
= \frac{24e^4 k\lg\frac{2N_j}{k}}{24e^4 k\lg\frac{2N_{j-1}}{k}} 
= \frac{\lg(48e^4 \lg\frac{2N_{j-1}}{k})}{\lg\frac{2N_{j-1}}{k}} 
= \frac{\lg(48e^4)}{\lg\frac{2N_{j-1}}{k}} + \frac{\lg\lg\frac{2N_{j-1}}{k}}{\lg\frac{2N_{j-1}}{k}}
< \frac{71}{100} + \frac{1}{4}=\frac{24}{25}
\ ,
\]
for $N_{j-1}\ge 2^{15}\cdot k$.
\end{proof}

Define a sequence $(a_j)_{j\ge 1}$: $a_j= \lg(\frac{2N_j}{k})$ for $j\ge 1$.

%: lemma sequence a_n

\begin{lemma}
\label{lem:sequence-a}

If $N_j\ge 2^{16} k$ then $a_{j+1}<\log_\frac{3}{2} a_j$.
\end{lemma}

\begin{proof}
The sequence $a_j$ satisfies the following recurrence, for $j\ge 1$:
\[
a_{j+1} = \lg\Bigl(\frac{2N_{j+1}}{k}\Bigr)= \lg (48e^4 \lg\Bigl(\frac{2N_j}{k}\Bigr)) = \lg (48e^4a_j)
\ .
\]
The inequality $\lg (48e^4 x)<\log_\frac{3}{2} x$ holds for $x\ge 17$, by inspection.
Substituting $a_j= \lg(\frac{2N_j}{k})$ for $x$, we obtain that $a_{j+1}<\log_\frac{3}{2} a_j$ for $\lg(\frac{2N_j}{k})\ge 17$, which holds for $N_j\ge 2^{16} k$.
\end{proof}

We use the iterated-logarithm function $\log^{(i)} n$, which denotes $\log n$ iterated $i$ times.
A recursive definition of this function reads $\log^{(0)} n=n$ and $\log^{(i+1)} n= \log \log^{(i)} n$.
We also refer to $\log^\ast n =\min\{i: \log^{(i)} n\le 1\}$, for $n>1$.

%: theorem compact rename

\begin{theorem}
\label{thm:compact-rename}

Algorithm \textsc{Compact-Rename}$(k,N)$ is $(k,N)$-renaming with $M=2^{15}\, k$ as a bound on new names,  assuming $N\ge 2^{15}\,k$.
It operates in $\cO(\log k (\log N + \log k\log^* \frac{N}{k}))$ local steps and uses $\cO(k\log\frac{N}{k})$ auxiliary registers. 
\end{theorem}

\begin{proof}
We assume  that $N\ge 2^{15}k$, as otherwise no epoch is executed.
The ranges of names used through the epochs can be traced back to $N_1=N$ as follows.
If $N_j\ge 2^{16} k$ then 
\[
N_{j+1} 
=
24e^4 k \lg\frac{2N_j}{k} 
=
24e^4 k a_j
< 
24e^4 k \log_{\frac{3}{2}}^{(j-i)} a_i
\ ,
\]
as long as $N_{j-i}\ge 2^{16}k$, by Lemma~\ref{lem:sequence-a}.
Combining this with Lemma~\ref{lem:shrinking-rate}, we obtain the bound
\[
N_{j+1} = \cO\Bigl(k\log^{(j)} \frac{N}{k}\Bigr)
\ ,
\]
 for $N\ge 2^{15}k$.
The final range of new names $[N_{j^\ast+1}]$ satisfies $N_{j^*+1} < 2^{15} k$.
The number of epochs is $j^\ast=\cO(\log^\ast \frac{N}{k})$.

The number of local steps can be estimated as follows:
\[
\cO\bigl(\sum_{j=1}^{j^*} \log k \log N_j\bigr) 
=
\cO\bigl(\log k \sum_{j=1}^{j^*}  (\log k + \log^{(j)} N)\bigr) 
=
\cO(\log^2 k\cdot \log^* \frac{N}{k}+\log k \cdot \log N )\ ,
\]
by Lemma~\ref{lem:Plain-Rename} and the bound  on the number of epochs  $j^\ast=\cO(\log^\ast \frac{N}{k})$.

The first epoch uses $48 e^4\lg \frac{2N}{k}$ registers.
A number of registers used in subsequent epochs is given by Lema~\ref{lem:Plain-Rename}.
These numbers keep decreasing, with a rate determined Lemmas~\ref{lem:shrinking-rate} and~\ref{lem:sequence-a}.
By combining this together we obtain that the number of needed registers  is
\[
\sum_{j=1}^{j^*+1} 48 e^4\lg \frac{2N_j}{k} = 48 e^4 k \sum_{j=1}^{j^*+1} a_j
 \ .
\]
The estimate on the rate of decreasing of $a_j$ given in Lemma~\ref{lem:sequence-a} applies for all but $\cO(1)$ epochs.
It follows that the number of needed registers is $\cO(k a_1)= \cO(k\log\frac{N}{k})$.
\end{proof}

The knowledge of $k$ and range $N$ that is polynonomial in $n$ allows to obtain a renaming algorithm efficient with respect to the total number of processes~$n$.

%: corollary rename known k and N polynomial in n

\begin{corollary}
\label{cor:known-k-and-N-rename}

If $k$ and $N$ are known and $N=\cO(n)$, then algorithms \textsc{Compact-Rename}$(k,N)$ runs in $\cO(\log^2 n)$ local steps and uses $\cO(n)$ auxiliary registers. 
If $k$ and $N$ are known and $N$ is polynomial in $n$, then algorithm \textsc{Compact-Rename}$(k,N)$ runs in $\cO(\log^2 n \log^* n)$ local steps and uses $\cO(n\log n)$ auxiliary registers. 
\end{corollary}

\begin{proof}
Algorithm \textsc{Compact-Rename}$(k,N)$ needs $\cO(\log k (\log N + \log k\log^* \frac{N}{k}))$ local steps, by Theorem~\ref{thm:compact-rename}.
This bound is $\cO(\log^2 n)$ if  $N=\cO(n)$, and it is $\cO(\log^2 n \log^* n)$  if $N$ is polynomial in~$n$.
Algorithm \textsc{Compact-Rename}$(k,N)$ uses $\cO(k\log\frac{N}{k})$ auxiliary registers, by Theorem~\ref{thm:compact-rename}.
If a known~$N$ is such that $N=\cO(n)$ then use $\cO(k\log\frac{n}{k})=\cO(n)$, which holds for $1\le k\le n$, to obtain the bound $\cO(n)$ on the number of registers.  
If a known $N$ satisfies $N=\cO(n^\alpha)$, for $\alpha> 1$, then the bound on the number of registers becomes $\cO(k\log\frac{N}{k})=\cO(n\log n)$.
\end{proof}

%: paragraph renaming when N is known but k is not

\Paragraph{Renaming when $N$ is known while $k$ is not.}

We present an algorithm  \textsc{Range-Rename}$(N)$, which renames $k$ contending processes assigning names of magnitude $\cO(k)$.
A bound~$N$ on the magnitude of original names is known but the number of participating processes $k$ is unknown. 

The algorithm is specified as follows.
A process participates in consecutive executions of algorithms \textsc{Compact-Rename}$(2^i,N)$, for consecutive integers $i=1,2,3\ldots$, until it obtains a new name. 
After the first execution is over, it starts a new execution only after the previous one has failed to assign a new name.
These consecutive executions use disjoint  sets of registers.
An execution of \textsc{Compact-Rename}$(2^i,N)$, for $i>1$, uses a next  interval of integers as a range of new names, following the intervals used by \textsc{Compact-Rename}$(2^j,N)$, for $1\le j<i$. 
This means the size of interval used by \textsc{Compact-Rename}$(2^i,N)$ is $2^{15} 2^i$, as indicated by Theorem~\ref{thm:compact-rename}, but it is of the form $[\ell, \ell + 2^{15} 2^i]$, for a suitable positive integer~$\ell$ that is greater than~$1$, except for $i=1$.

%: theorem known N algorithm Range Rename 

\begin{theorem}
\label{thm:range-rename}

Algorithm \textsc{Range-Rename}$(N)$ is $N$-renaming.
It assigns new names of magnitude $2^{16}k$, runs in $\cO(\log^2 k (\log N + \log k\log^\ast N))$ local steps, and uses $\cO(n\log\frac{N}{n})$ auxiliary registers.
\end{theorem}

\begin{proof}
At most $k$ processes participate in each execution of \textsc{Compact-Rename}$(2^i,N)$. We have that  $i\le \lceil \lg k\rceil$, because as soon as $k\le 2^i$ then each process acquires a new name.
The size of the range of new names is estimated by Theorem~\ref{thm:compact-rename} to be at most 
\[
2^{15}\sum_{j=1}^{\lceil \lg k\rceil} 2^j\le 2^{16} k
\ .
\]
The number of steps through executing  \textsc{Compact-Rename}$(2^{\lceil\lg k\rceil},N)$ follows from Theorem~\ref{thm:compact-rename}:
\[
\cO\Bigl(\sum_{j=1}^{\lceil\lg k\rceil} \Bigl(j\log N +j^2\log^* \frac{N}{2^j}\Bigr) \Bigr) =
\cO(\log^2 k (\log N + \log k \log^\ast N))
\ .
\]
The number of registers follows from Theorem~\ref{thm:compact-rename}:
\[
\cO\left(\sum_{j=1}^{\lceil\lg n\rceil} 2^j\log\frac{N}{2^j}\right) =\cO\left(n\log\frac{N}{n}\right)
\ , 
\]
 since $k\le n$.
\end{proof}

The renaming algorithms for the case when $N$ is known use few auxiliary registers if $N$ is   polynomial in~$n$.

%: corollary rename known N polynomial in n

\begin{corollary}
\label{cor:only-known-N-rename}

If a known range of the original names~$N$ is polynomial in $n$, then algorithm   \textsc{Range-Rename}$(N)$ runs in $\cO(\log^3 n \log^\ast n)$ local steps and uses $\cO(n\log n)$ auxiliary registers. 
If a known range of the original names $N$ satisfies $N=\cO(n)$, then algorithm  \textsc{Range-Rename}$(N)$ uses  $\cO(n)$ auxiliary registers. 
\end{corollary}

\begin{proof}
Algorithm \textsc{Range-Rename}$(N)$ runs in $\cO(\log^2 k (\log N + \log k\log^\ast N))$ local steps, by Theorem~\ref{thm:range-rename}.
The time bound becomes $\cO(\log^3 n \log^\ast n)$ if $N$ is polynomial in~$n$, since $k\le n$.
Algorithm \textsc{Range-Rename}$(N)$ uses $\cO(n\log\frac{N}{n})$ auxiliary registers, by Theorem~\ref{thm:range-rename}.
If $N=\cO(n)$ then the number of registers is $\cO(n\log\frac{N}{n})=\cO(n)$, and if $N=\cO(n^\alpha)$, for $\alpha\ge 1$, then the number of registers is $\cO(n\log\frac{N}{n})=\cO(n\log n)$.
\end{proof}

%: paragraph renaming when only k is known while N is not

\Paragraph{Renaming when $k$ is known while $N$ is not.}

We design an algorithm \textsc{Square-Rename}$(k)$, where a range $N$ is unspecified.
The algorithm assigns new names with $2k-1$ as a range of new names.

The algorithm refers to three other algorithms, some for the case when both parameters $k$ and $N$ are known.
Let \textsc{MA}$(k)$ be an adaptive algorithm given by Moir and Anderson~\cite{MoirA95}, which is $k$-renaming with $M=\cO(k^2)$ as a  bound on new names.
It operates in $\cO(k)$ local steps and uses $\cO(k^2)$ auxiliary registers.
Let \textsc{AF}$(k,N)$ be the algorithm of Attiya and Fouren~\cite{AttiyaF01}, which is $(k,N)$-renaming with $2k-1$ as a bound on new names.
It operates in $\cO(N)$ local steps and uses $\cO(N^2)$ auxiliary registers.
We use algorithm \textsc{Compact-Rename} together with algorithms \textsc{AF} and \textsc{MA} to obtain a new algorithm called \textsc{Square-Rename}$(k)$, which is $k$-renaming with $2k-1$ as a bound on new names, for any $k$ and $N$.

Algorithm \textsc{Square-Rename}$(k)$ is structured into three parts.
The sets of registers used in all three parts are disjoint.
First run algorithm \textsc{MA}$(k)$ to rename, with $C k^2$ as a  bound on new names, for some $C>0$.
Continue by invoking \textsc{Compact-Rename}$(k,Ck^2)$ to rename again, with $2^{15} k$ as a  range of new names, by Theorem~\ref{thm:compact-rename}.
The processes execute \textsc{Compact-Rename}$(k,Ck^2)$ with the names obtained in the preceding execution of \textsc{MA}$(k)$ treated as original names. 
Finally, execute \textsc{AF}$(k, 2^{15} k)$ to rename one more time.
The processes execute \textsc{AF}$(k,2^{15}k)$ with names obtained in \textsc{Compact-Rename}$(k,Ck^2)$ treated as original names. 
The final new names are as assigned by \textsc{AF}$(k,2^{15} k)$.

%: theorem square rename

\begin{theorem}
\label{thm:square-rename}

Algorithm \textsc{Square-Rename}$(k)$ is $k$-renaming with $2k-1$ as a bound on new names.
It operates  in~$\cO(k)$ local steps and uses $\cO(k^2)$ auxiliary registers.
\end{theorem}

\begin{proof}
We rely on the properties of algorithm~\textsc{MA} from~\cite{MoirA95}, and on the properties of  algorithm~\textsc{AF} given in~\cite{AttiyaF01}.
Correctness follows from the fact that each of the three renaming algorithms properly handles original names, possibly yielded by a preceding execution, assuming they are given correct  ranges of original names, if they rely on  this information.
The range of names is reduced first  to $Ck^2$ by algorithm~\textsc{MA}, for a suitable $C>0$.
Then algorithm~\textsc{Compact-Rename} takes over, with $N=Ck^2$, and reduces the range of names to $2^{15} k$, by Theorem~\ref{thm:compact-rename}. 
Finally, algorithm~\textsc{AF} reduces the range of names to $2k-1$, while using $N=2^{15} k$.

The local step complexity of algorithm \textsc{Square-Rename}$(k)$ is  
\[
\cO\left(k+\log k \left(\log k^2 + \log k\log^* \frac{k^2}{k}\right) +k\right)=\cO(k)
\ , 
\]
by the properties of  algorithm \textsc{MA} in~\cite{MoirA95}, and by these of  algorithm~\textsc{AF} given in~\cite{AttiyaF01}, and by Theorem~\ref{thm:compact-rename}.
The number of needed registers is at most
\[
\cO\left(k^2+k\log\frac{k^2}{k}+k^2\right)=\cO(k^2)
\ ,
\]
by the respective properties of algorithms \textsc{MA}$(k)$ and \textsc{AF}$(k,2^{15}k)$, and by Theorem~\ref{thm:compact-rename}.
\end{proof}

%: paragraph adaptive renaming with both k and N unknown

\Paragraph{Adaptive renaming with both $k$ and $N$ unknown.}

Now we develop algorithm \textsc{Adaptive-Rename} solving Renaming  in a fully adaptive fashion.
The algorithm uses \textsc{Square-Rename} as a subroutine.
It operates as follows.
A process~$p$ executes instantiations of algorithm \textsc{Square-Rename}$(2^j)$, for consecutive integers $i=0,1,2,\ldots$. 
If $p$ does not obtain a new name in an execution of \textsc{Square-Rename}$(2^j)$, then it proceeds to execute \textsc{Square-Rename}$(2^{j+1})$.
Once a process~$p$ acquires a new name, then it halts.
Executions of \textsc{Square-Rename}$(2^{j})$ for different values of $j$ use dedicated sets of auxiliary registers assigned to each possible integer value~$j\le \lceil\lg k\rceil$ and also dedicated ranges of names assigned to $j$.
These ranges are disjoint, and a range for $j+1$ immediately follows a range for $j$, such that the range for $i$ such that $0\le i\le j$ fill a contiguous segment.
Algorithm \textsc{Adaptive-Rename} does  not have the parameters $k$ and $N$ in its code.

%: theorem adaptive rename

\begin{theorem}
\label{thm:adaptive-rename}

Algorithm \textsc{Adaptive-Rename}  solves Renaming  in an adaptive manner. 
The range of new names is $M=8k-\lg k-1$.
The number of local steps is~$\cO(k)$ and the number of auxiliary registers is~$\cO(n^2)$.
\end{theorem}

\begin{proof}
Consider an instantiation of \textsc{Square-Rename}$(2^j)$ for $j=\lceil\lg k\rceil$, which is the latest possible. 
At most $k$ processes participate in this execution.
By Theorem~\ref{thm:square-rename}, the magnitude  of new names is bounded above by 
\[
\sum_{i=0}^{\lceil\lg k\rceil} (2^{i+1}-1) = 2^{\lceil\lg k\rceil+2}-(\lceil\lg k\rceil+1) \le 8k-\lg k -1
\ .
\]
The number of local steps of each process is bounded from above by 
\[
\cO\Bigl(\sum_{i=1}^{\lceil\lg k\rceil} 2^i\Bigr) =\cO(2^{\lceil\lg k\rceil}) =\cO(k)
\ , 
\]
by Theorem~\ref{thm:square-rename}.
The number of needed auxiliary registers used is  
\[
\cO\Bigl(\sum_{i=1}^{\lceil\lg k\rceil} 2^{2i}\Bigr) = \cO(2^{2 \lceil\lg k\rceil}) = \cO(n^2)
\ , 
\]
again by Theorem~\ref{thm:square-rename}. 
\end{proof}

There is an alternative adaptive solution for Renaming, which works as follows.
First execute an adaptive version of  algorithm \textsc{MA}.
It accomplishes renaming in $\cO(k)$ local steps and $k^2$ new names using $\cO(n^2)$ registers.
Follow by executing algorithm \textsc{Range-Rename}$(k^2)$.
By Theorem~\ref{thm:range-rename}, this solves Renaming  in $\cO(k+\log^3 k)=\cO(k)$ local steps while using $\cO(n^2+n\log n)=\cO(n^2)$ registers.
A drawback of this algorithm is that the range of new names, although still $\cO(k)$, has a large constant factor by~$k$.
This is not the case for the algorithm \textsc{Adaptive-Rename}, where the range of names is smaller than $8k-\lg k$.

%: paragraph solutions for store and collect

\Paragraph{Solutions for Store\&Collect.}

We show how to implement the \texttt{Store} and \texttt{Collect} operations, with $k$ processes out of $n$ participating, with original names in the range~$[N]$.
We want the first \texttt{Store} by a process to begin by executing a suitable renaming algorithm.
A new name identifies a register which the process uses to store by writing into it.
Each of the subsequent calls of \texttt{Store} takes a constant number of local steps, because the register to write to has already been identified 
The algorithms for Store\&Collect are categorized by the levels of knowledge of $k$ and $N$, with $n$ always known.

A number~$M$ is a range of new names in a renaming algorithm.
If $M$ can be computed in advance, based on the available knowledge of the numbers $k$ and $N$, then an algorithm can be structured as follows.
We allocate $M$ shared read-write registers $S_i$ indexed by natural numbers~$i$ in~$[M]$.
In order to store a value, a process first seeks a new name~$x$ in~$[M]$.
This assigns a unique read-write register~$S_x$.
The process with a new name~$x$ stores its values by writing into~$S_x$.
Collecting is performed by reading the registers~$S_i$, for all $i\in [M]$.
This approach works when both parameters~$k$ and~$N$ are known.

%: theorem collect both k and N known

\begin{theorem}
\label{thm:collect-k-N-both-known}

If both $k$ and $N$ are known, then Store\&Collect can be implemented such that a first storing operation by a process takes  $\cO(\log k (\log N + \log k\log^* \frac{N}{k}))$ steps and collecting $\cO(k)$ steps, while using $\cO(k\log\frac{N}{k})$ auxiliary registers.
\end{theorem}

\begin{proof}
Since we use algorithm \textsc{Compact-Rename}$(k,N)$ for renaming, it is Theorem~\ref{thm:compact-rename} that summarizes the performance characteristics of renaming. 
The number of registers to store values corresponding to the new names is~$M=2^{15}k$.
The first \texttt{Store} operation of a process is preceded by acquiring a new name.
This takes  $\cO(\log k (\log N + \log k\log^* \frac{N}{k}))$ local steps.
The \texttt{Collect} operation consists of reading all the $2^{15} k$ shared registers, which takes time $\cO(k)$.
The algorithm uses $\cO(k\log\frac{N}{k})$ registers for renaming and $2^{15}k$ registers for storing, which together make $\cO(k\log\frac{N}{k})$ auxiliary registers.
\end{proof}

The knowledge of $k$ and a range of original names~$N$ that is polynomial in~$n$ allows to obtain a Store\&Collect algorithm efficient with respect to the total number of processes~$n$.

%: corollary store collect both k and N known

\begin{corollary}
\label{cor:store-collect-both-k-and-N-known}

If both $k$ and $N$ are known and  $N=\cO(n)$ then Store\&Collect can be implemented such that a first storing operation by a process takes  $\cO(\log^2 n)$ local steps  and collecting $\cO(k)$ steps, while using $\cO(n)$ auxiliary registers.
If both $k$ and $N$ are known and  $N$ is polynomial in~$n$ then Store\&Collect can be implemented such that a first storing operation by a process takes  $\cO(\log^2 n \log^* n)$ local steps  and collecting $\cO(k)$ steps, while using $\cO(n\log n)$ auxiliary registers.
\end{corollary}

\begin{proof}
This is a specialization of the solution of Store\&Collect when both $k$ and $N$ are known with performance summarized in Theorem~\ref{thm:collect-k-N-both-known}.
Since we use algorithm \textsc{Compact-Rename}$(k,N)$ for renaming, we refer to Corollary~\ref{cor:known-k-and-N-rename} that summarizes it performance characteristics with additional assumptions on the magnitude of~$N$.
\end{proof}

Next, we consider the case when $N$ is known but $k$ is unknown.
These assumptions qualify algorithm \textsc{Range-Rename}$(N)$ for renaming to be applicable.
This algorithm assigns new names that are in the range between~$1$ and~$2^{16} k$, if $k$ processes contend for new names, by Theorem~\ref{thm:range-rename}.
A solution to Store\&Collect needs to have at least~$2^{16} n$ registers allocated in advance, since the unknown $k$ could be as large as~$n$.

In order to read only the registers actually used for storing, while performing \texttt{Collect}, we suitably mark an initial range of registers during writing into them not to waist time reading unused registers.
This is implemented as follows.
The number~$2^{16} n$ of registers needs to be incremented by $16+\lceil \lg n\rceil$. 
These many registers are identified by the integers in the range $[1,2^{16} n + \lceil \lg n\rceil+16]$ as their \emph{primary addresses}, in that the $i$th register has primary address~$i$.
All these registers are initialized to be null as usual. 
The registers with primary addresses of the form $i+2^i$, for $i\ge 0$, play an \emph{auxiliary} role.
If a value written in such a register is different from null then the register is \emph{marked}. 
The registers with primary addresses between $2^i+i+1$ up to $2^{i+1}+i$ have secondary addresses between $2^i$ and $2^{i+1}-1$, for $i\ge 0$.
Once can verify directly that each register is either auxiliary or it has a unique secondary address, and each integer between $1$ and $2^{16}n$ is a secondary address of a unique register. 

A process with a new name $m$ stores its proposed value along with its original name in a register of the secondary address~$m$.
During its first operation \texttt{Store}, the process marks all the auxiliary registers whose primary addresses are less than the primary address of the register with $m$ as the secondary address.
In order to perform \texttt{Collect}, a process reads the consecutive registers up to the first unmarked auxiliary register or all the registers if all auxiliary registers are marked.
An unmarked auxiliary register indicates that no value has been stored beyond this primary address.

\begin{theorem}
\label{thm:collect-only-N-known}

If $N$ is known but $k$ is not, then Store\&Collect can be implemented such that a first instance of storing takes $\cO(\log^2 k (\log N + \log k\log^\ast N))$ local steps and collecting $\cO(k)$ steps, with $\cO(n\log\frac{N}{n})$ auxiliary registers.
\end{theorem}

\begin{proof}
As a renaming subroutine, we use  \textsc{Range-Rename}$(N)$, and refer to Theorem~\ref{thm:range-rename}, which summarizes the performance characteristics of the algorithm. 
The number of registers to store values corresponding to the new names is~$2^{16}n$, because the range of new names is $M=2^{16}k$ with $k\le n$.
The first \texttt{Store} operation of a process takes time $\cO(\log^2 k (\log N + \log k\log^\ast N))$ to identify a new name and then up to $1+\lg k$ steps to mark auxiliary registers and store a proposed value.
Each of the subsequent \texttt{Store} operations takes constant time.
The \texttt{Collect} operation consists of reading up to $\lceil \lg k\rceil$ auxiliary registers and up to the $2^{16} k$ registers with the smallest secondary addresses.
The algorithm uses $\cO(n\log\frac{N}{n})$ auxiliary registers for renaming and $2^{16} n + \lceil \lg n\rceil+16$ registers for storing, which together make $\cO(n\log\frac{N}{n})$ registers.
\end{proof}

The  algorithms for Store\&Collect for the case when $N$ is known use few auxiliary registers if $N$ is  polynomial in~$n$.

%: corollary store collect N known but k not

\begin{corollary}
\label{cor:store-collect-N-known}

If a known range of the original names $N$ is such that $N=\cO(n)$, then Store\&Collect can be implemented using $\cO(n)$ auxiliary registers.
If a known range of the original names $N$ is polynomial in $n$, then Store\&Collect can be implemented using $\cO(n\log n)$ auxiliary registers. 
In each of these cases, a first operation of storing can be performed in a number of local steps poly-logarithmic in~$n$ and collecting takes $\cO(k)$ steps.
\end{corollary}

\begin{proof}
This is a specialization of the solution of Store\&Collect when $N$ is known but $k$ is not with performance summarized in Theorem~\ref{thm:collect-only-N-known}.
Since we use algorithm \textsc{Range-Rename}$(N)$ for renaming, we refer to Corollary~\ref{cor:only-known-N-rename} that summarizes it performance characteristics with additional assumptions on the magnitude of~$N$.
\end{proof}

Next we consider the case when $k$ is known but $N$ is not.
We want to use the algorithm \textsc{Square-Rename}$(k)$ for renaming.
By Theorem~\ref{thm:square-rename}, the range of new names can be determined as $M = 2k-1$ and renaming requires $\cO(k^2)$ auxiliary registers.
A solution to Store\&Collect needs $Ck^2$ shared registers for renaming, for a suitable $C>0$, and $2k-1$ registers for storing.

%: theorem store and collect only k known

\begin{theorem}
\label{thm:collect-only-k-known}

If the number of participating processes~$k$ is known but the range of original names~$N$ is not, then Store\&Collect can be implemented such that a first instance of storing takes $\cO(k)$ local steps and collecting $\cO(k)$ steps, with $\cO(k^2)$ auxiliary registers.
\end{theorem}

\begin{proof}
The design of the implementations of \texttt{Store} and \texttt{Collect} follows the general approach of having a dedicated block of $2k-1$ registers for storing and each participating process first acquiring a new name to identifies a register.
By Theorem~\ref{thm:square-rename}, this takes $\cO(k)$ local steps while using $\cO(k^2)$ auxiliary registers.
To collect the proposed values to build a view, a process reads all $2k-1$ registers used for storing, which takes $\cO(k)$ local steps.
\end{proof}

Next we consider the adaptive case with both $k$ and $N$ unknown.
We want to use algorithm \textsc{Adaptive-Rename} for renaming. 
According to its performance characteristics summarized in Theorem~\ref{thm:adaptive-rename}, the range of new names is $M=8k-\lg k-1$ and the number of auxiliary registers is~$\cO(n^2)$.
Registers used for storing can be handled similarly as in the case of unknown $k$ and known~$N$, by arranging a suitably large segment of registers and using them by referring to primary and secondary addresses.
We need $8n -\lg n-1$ secondary addresses, so a block of registers with $8n -\lg n-1 + \lceil \lg(8n -\lg n-1) \rceil$ primary addresses suffices.
Additionally, we need to allocate $Dn^2$ registers for renaming, for a suitable $D>0$.

\begin{theorem}
\label{thm:collect-3}

If $k$ and $N$ are both unknown, then Store\&Collect can be implemented adaptively such that first storing takes $\cO(k)$ steps and collecting takes $\cO(k)$ steps, with $\cO(n^2)$ auxiliary registers.
\end{theorem}

\begin{proof}
We use  \textsc{Adaptive-Rename} for renaming purposes.
The total number of auxiliary registers is $\cO(n^2)$, with $\cO(n)$ for storing and $\cO(n^2)$ for renaming. 
The first store operation of a process takes $\cO(k+\log k)=\cO(k)$ local steps, by Theorem~\ref{thm:adaptive-rename}.
The subsequent store operations are of constant time each.
The collecting operation takes $\cO(k)$ steps, since there is only a prefix of $\cO(k)$ registers corresponding to the names to be read. 
\end{proof}

An adaptive solution of Store\&Collect with performance as in Theorem~\ref{thm:collect-3} has already been given by Afek and De Levie~\cite{AfekL07}, by using a different approach.

\section{Lower Bounds on Local Steps}

We consider the time complexity of renaming and storing for the first time.
A \emph{configuration} of the system consists of events enabled by each process, which are either reads or writes. 
An \emph{execution} consists of events as they occur in time.
An event that occurs is selected from a current configuration. 
A process participating at a current event immediately enables either a read or a  write to contribute an option for such a read or write to occur in the next configuration.

%: paragraph lower bound for renaming

\Paragraph{A lower bound for renaming.}

For a distributed system of $n$ processes, an instance of Renaming  is determined by some of the following four  parameters: an upper bound on the number of participating processes~$k$, a range of new names~$M$, a range of original names~$N$, and a number of auxiliary shared read-write registers~$r$.
For some configurations of these parameters, the number of steps can be very small; for instance,  if $N=\Theta(k)$ then the original names could do, so the number of steps is~$\cO(1)$. 
On the other side of the spectrum, if $N$ is not known and can be arbitrarily large, then the step complexity of any renaming algorithm is $k-1$.

An intuition why the number of steps is $k-1$ in some scenarios could be build as follows. 
If the original names affect the actions of processes, then there is such an assignment of the $n$ original names that at any point of an execution, if there is a choice that processes can make which is affected by the original names, then all the processes might choose similarly.
In particular, if processes choose not to write, so that there is no communication among them, then all want to assign the same name.
Therefore, one of the processes writes at some point, and let $p$ be the first such a process.
After a write, some processes learn of the value written, by reading. 
All processes have to learn the value written by~$p$, since otherwise one process among these that never learn could choose the same name as~$p$.
This argument can be extended by induction to imply that the process that chooses the name as the last one had to read at least $k-1$ times.
A general lower bound of Jayanti et al.~\cite{JayantiTT00} captures related scenarios.

We present a lower bound which gives an estimate on $N$, depending on fixed range of original names~$M$ and the number of auxiliary shared read-write register~$r$, for which $k-1$ steps during assigning new names are necessary.
The lower bound is flexible enough to be applicable to scenarios when algorithms of poly-logarithmic step complexity exist.

%: rename rename lower bound

\begin{theorem}
\label{thm:rename-lower-bound}

Let $k$ processes among a total of $n$ in an asynchronous system using $r$ shared read-write registers and with original names in the range $[N]$ execute a wait-free renaming algorithm that assigns new names in a range $[M]$. 
Then there exists an execution in which some process performs $1+\min\{k-2,\lfloor\log_{2r} \frac{N}{M+k-1}\rfloor\}$ steps.
\end{theorem}

\begin{proof} 
Consider a conceptual set consisting of $N$ processes, each identified by a different original name.
The first goal is to identify a subset~$K$ of these processes consisting of at most~$k$ elements so that an execution of the algorithm by these processes results in a necessary number of local steps. 
The construction of the set~$K$ is recursive and proceeds through a sequence of stages.

A stage represents a group of concurrent reads or writes to shared registers.
A stage determines groups of processes we call \emph{pool} and \emph{residue}.
When the stages get completed, the pool and the residue together make a set~$K$ we seek.
In notation, let \emph{stage $i$} result in determining a \emph{pool set~$P(i+1)$} of processes eligible to be considered for stage $i+1$, a \emph{residue set~$Q(i+1)$}, by determining a prefix~$\cE(i)$ of an execution.
The construction starts with the initial pool~$P(0)$ consisting of $N$ conceptual processes, each identified by its  original name, the initial residue $Q(0)$ is empty, and $\cE(0)$ is  an empty prefix of an execution.

Suppose we have a configuration with a determined sets $P(i)$ and $Q(i)$ and a prefix $\cE(i)$ of the execution.
Let $R(i)$ consist of the processes in $P(i)$ that have a read enabled in the configuration and $W(i)$ consist of the processes in $P(i)$ that have a write enabled.
There are two cases we considered next that determine $P(i+1)$, $Q(i+1)$, and $\cE(i+1)$.

Suppose first that $|R(i)|\ge |W(i)|$.
By the pigeonhole principle, there is a register which a group of at least these many processes want to read:
\[
\frac{|R(i)|}{r}\ge \frac{|P(i)|}{2r}
\ .
\] 
Define the pool $P(i+1)$ to be this group of processes and the residue $Q(i+1)$ to be equal to~$Q(i)$. 
The prefix $\cE(i+1)$ is obtained from $\cE(i)$ by having the processes in $R(i)$ read the register in arbitrary order. 

Suppose next that $|R(i)| < |W(i)|$.
By the pigeonhole principle, there is a register to which  a group of at least these many processes want to write to:
\[
\frac{|W(i)|}{r}\ge \frac{|P(i)|}{2r}
\ .
\] 
Define the pool $P(i+1)$ to be this group of processes.
The prefix $\cE(i+1)$ is obtained from $\cE(i)$ by having the processes in $W(i)$ write to the register in arbitrary order. 
A process~$p$ that writes last in this group is added to $Q(i)$ to obtain $Q(i+1)=Q(i)\cup\{ p \}$.

As the recursive construction progresses, for $i\ge 0$, the following invariant is maintained:
\begin{enumerate}
\item[1)]
the pool $P(i)$ includes at least $\frac{N}{(2r)^i}$ processes, 
\item[2)]
all the processes in~$P(i)$ have exactly the same history in~$\cE(i)$ of reading from  shared registers,
\item[3)] 
the residue $Q(i)$ includes  at most $i$ processes,
\item[4)]
all the processes that wrote a value to a shared register that has ever been read in~$\cE(i)$  are in~$Q(i)$. 
\end{enumerate}

We continue for the maximum number of stages~$t$  such that two constraints are met.
One is that the inequality $\frac{N}{(2r)^{t}}\ge M+k-1$ holds, where $M$ is the range of new names.
The resulting stage number $t$ satisfies $t\le \log_{2r} \frac{N}{M+k-1}$ and $P(t)$ contains at least $M+k-1$ elements.
The other constraint is that $t\le k-2$, so that $Q(t)$ has at most $k-2$ elements. 
These two requirements combined determine 
\[
t=\min\left\{k-2,\Bigl\lfloor\log_{2r} \frac{N}{M+k-1}\Bigr\rfloor\right\}
\ .
\]
The definitions of $P(t)$ and $Q(t)$ imply that the processes in~$P(t) \setminus Q(t)$  have not written in~$\cE(t)$ yet  and have read the same values from the same shared registers in the same order of reading.
The set $P(t) \setminus Q(t)$ has at least $M+k-1-(k-2)\ge M+1$ elements.

Suppose that a decision on a new name is made by each among the processes in $P(t) \setminus Q(t)$ were possible without any further reads or writes.
The range of new names has $M$ elements.
By the pigeonhole principle, there are two processes $p_1$ and $p_2$ in $P(t) \setminus Q(t)$ that would get assigned  the same name.
This means some process among $p_1$ and $p_2$ performs at least one additional read.
We set $K=Q(t)\cup \{p_1,p_2\}$, which has at most $k$ elements.

There is an execution $\cE$ of the algorithm, with the processes in~$K$ as the only contenders for new names, such that $\cE(t)$ restricted only to the events involving the processes in~$K$ is a prefix of~$\cE$.
The processes that wrote values ever read in~$\cE(t)$ are in~$K$.
The processes $p_1$ and $p_2$ have the same history of reads  in~$\cE(t)$, and each of them read $t$ times without writing even once.
The algorithm is a wait-free solution to Renaming, so both $p_0$ and $p_1$ eventually assign themselves  names in~$\cE$.
It follows that at least one of the processes $p_1$ and $p_2$ eventually performs at least one read  after~$\cE(t)$.
This means that the process makes at least $1+t$ steps in execution~$\cE$.
\end{proof} 

Theorem~\ref{thm:rename-lower-bound} implies that some process executing a renaming algorithm has to perform at least $k-1$ steps in a suitable distributed system, which we show next.

%: corollary k-renaming lower bound k-1 

\begin{corollary}
\label{cor:lower-bound-k-steps}

For any $k$-renaming algorithm for processes with new names in some range~$[M]$ and using some $r$ shared read-write registers, there exists a sufficiently large range of original names~$[N]$ and an execution in which some process performs at least $k-1$ steps.
\end{corollary}

\begin{proof}
Let us choose $N$ such that the part $\log_{2r} \frac{N}{M+k-1}$ in the lower bound of Theorem~\ref{thm:rename-lower-bound} is at least $k-2$.
To thus end, it suffices to set $N$ to be at least as large as $(M+k-1)(2r)^{k-2}$.
Then the lower bound of Theorem~\ref{thm:rename-lower-bound} becomes $k-1$.
\end{proof}

Algorithm \textsc{Square-Rename}$(k)$ works for arbitrary range $[N]$ of  original names.
Corollary~\ref{cor:lower-bound-k-steps} implies that this algorithm is asymptotically optimal with respect to local-step performance, which is $\cO(k)$.

%: corollary lower bound k N renaming

\begin{corollary}
\label{cor:lower-bound-k-N-renaming}

For any $(k,N)$-renaming algorithm that has $M = \cO( k)$ as a bound on new names and uses $\cO(k\log\frac{N}{k})$ auxiliary registers, for a range of original names $N=\Omega(n^2)$, there exists an execution in which some process performs $\Omega\bigl(\frac{\log n}{\log k +\log\log n}\bigr)$ steps.
\end{corollary}

\begin{proof}
We reformulate the logarithmic part of the bound in Theorem~\ref{thm:rename-lower-bound} as follows:
\begin{equation}
\label{eqn:rename-lower-bound}
\log_{2r} \frac{N}{M+k-1} = \frac{\lg \frac{N}{M+k-1}}{\lg (2r)} = \frac{\lg N-\lg (M +k-1)}{\lg r + 1}
\ .
\end{equation}
By the assumptions, we have $\Omega(1) + 2\lg n \le \lg N$, $\lg (M+k-1) \le \lg (2M) \le \cO(1) + \lg k$,  and $\lg  r \le  \lg k + \lg\lg n +\cO(1)$.
Substitute these into~\eqref{eqn:rename-lower-bound} to obtain a bound
\[
\frac{\lg N-\lg (M +k-1)}{\lg r + 1} \ge\frac{\lg n +\Omega(1)}{\lg k+ \lg\lg n +\cO(1)}
\ ,
\]
which is $\Omega\bigl(\frac{\log n}{\log k+\log\log n}\bigr)$.
\end{proof}

Corollary~\ref{cor:lower-bound-k-N-renaming} implies that algorithm \textsc{Compact-Rename}$(k,N)$ may miss local-step optimality by a factor that is about $\log ^2 k$.
More precisely, assuming additionally that $k=\Omega(\log n)$, the lower bound of Theorem~\ref{thm:rename-lower-bound} becomes $\Omega\bigl(\frac{\log n}{\log k}\bigr)$. Each process executing algorithm \textsc{Compact-Rename}$(k,N)$  performs  $\cO(\log k (\log N + \log k\log^* \frac{N}{k}))$ steps, by Theorem~\ref{thm:compact-rename}. 
We have the following asymptotic  identity $\log ^2 k \cdot \Omega\bigl(\frac{\log n}{\log k}\bigr) =\Omega(\log k (\log N + \log k\log^* \frac{N}{k}))$, under the additional assumption about $k$ that $\log k\cdot \log^* \frac{N}{k} =\cO(\log n)$.

%: paragraph lower bound for storing

\Paragraph{A lower bound on first storing.}

The lower bound on the first operations of storing applies to implementations of solutions of Store\&Collect in which a process proposes a value by writing it to a dedicated register, one such a register per process, and collecting is performed by reading all such registers.
The first operation \texttt{Store} that a process performs requires identifying a register to write a proposed value. 
We assume there are $r$ shared registers available.
The processes have names from the range~$[N]$.

%: theorem store collect lower bound

\begin{theorem}
\label{thm:store-collect-lower-bound}

Let $k$ processes among a total of $n$ in an asynchronous system using $r$ auxiliary shared read-write registers and with names in the range $[N]$ execute a wait-free algorithm for Store\&Collect in a system with $r$ shared read-write registers. 
Then there exists an execution in which some process performs $1+\min\{k-2,\lfloor\log_{2r} \frac{N}{r+k-1}\rfloor\}$ steps in its first operation \texttt{Store}.
\end{theorem}

\begin{proof} 
A proof  is similar to that of Theorem~\ref{thm:rename-lower-bound}.
We structure a prefix of an execution $\cE$ to reflect stages, with a pool $P(i)$ and residue $Q(i)$ sets of processes after stage~$i$.
An execution continues for the maximum number of stages~$t$  such that two constraints are met.
One is that the inequality $\frac{N}{(2r)^{t}}\ge r+k-1$ holds.
The resulting stage number $t$ satisfies $t\le \log_{2r} \frac{N}{r+k-1}$ and $P(t)$ contains at least $r+k-1$ elements.
The other constraint is that $t\le k-2$, so that $Q(t)$ has at most $k-2$ elements. 
These two requirements combined determine 
\[
t=\min\left\{k-2,\Bigl\lfloor\log_{2r} \frac{N}{r+k-1}\Bigr\rfloor\right\}
\ .
\]
The definitions of $P(t)$ and $Q(t)$ imply that the processes in~$P(t) \setminus Q(t)$  have not written in~$\cE(t)$ yet  and have read the same values from the same shared registers in the same order of reading.
The set $P(t) \setminus Q(t)$ has at least $r+k-1-(k-2)\ge r+1$ elements.

Suppose that selections of registers for storing by each among the processes in $P(t) \setminus Q(t)$ were possible without any further reads or writes.
There are $r$ shared registers.
By the pigeonhole principle, there are two processes $p_1$ and $p_2$ in $P(t) \setminus Q(t)$ that would select the same register.
This means some process among $p_1$ and $p_2$ performs at least one additional read.
We set $K=Q(t)\cup \{p_1,p_2\}$, which has at most $k$ elements.

There is an execution $\cE$ of the algorithm, with the processes in~$K$ as the only competitors to select registers for storing, such that $\cE(t)$ restricted only to the events involving the processes in~$K$ is a prefix of~$\cE$.
The processes that wrote values ever read in~$\cE(t)$ are in~$K$.
The processes $p_1$ and~$p_2$ have the same history of reads  in~$\cE(t)$, and each of them read $t$ times without writing even once.
At least one of the processes $p_1$ and $p_2$ eventually performs at least one read  after~$\cE(t)$.
This means that the process makes at least $1+t$ steps in execution~$\cE$.
\end{proof}

Theorem~\ref{thm:store-collect-lower-bound} implies that for each solution of Store\&Collect some process has to perform at least $k-1$ steps during its first storing, in a suitable distributed system, which we show next.

%: corollary storing lower bound k-1 

\begin{corollary}
\label{cor:storing-lower-bound-k-steps}

For any Store\&Collect algorithm for a system with some $r$ shared read-write registers, there exists a sufficiently large range of original names~$[N]$ and an execution in which some process performs at least $k-1$ steps.
\end{corollary}

\begin{proof}
Let us choose $N$ such that the part $\log_{2r} \frac{N}{r+k-1}$ in the lower bound of Theorem~\ref{thm:store-collect-lower-bound} is at least $k-2$.
To thus end, it suffices to set $N$ to be at least as large as $(r+k-1)(2r)^{k-2} $.
Then the lower bound of Theorem~\ref{thm:store-collect-lower-bound} becomes $k-1$.
\end{proof}

The algorithm for Store\&Collect with performance characteristics summarized in Theorem~\ref{thm:collect-only-k-known} works for an arbitrary range $[N]$ of original names.
Corollary~\ref{cor:storing-lower-bound-k-steps} implies that this algorithm is asymptotically optimal with respect to local-step performance of first storing, which is $\cO(k)$.

%: corollary lower bound k N storing

\begin{corollary}
\label{cor:lower-bound-k-N-storing}

For any algorithm for Store\&Collect that  uses $r=\cO(k\log\frac{N}{k})$ auxiliary registers for a range of original names $N=\Omega(n^3)$, there exists an execution in which some process performs $\Omega\bigl(\frac{\log n}{\log k +\log\log n}\bigr)$ steps in its first storing.
\end{corollary}

\begin{proof}
Reformulate the logarithmic part of the bound in Theorem~\ref{thm:store-collect-lower-bound} as follows:
\begin{equation}
\label{eqn:store-lower-bound}
\log_{2r} \frac{N}{r+k-1} = \frac{\lg \frac{N}{r+k-1}}{\lg (2r)} = \frac{\lg N-\lg (r +k-1)}{\lg r + 1}
\ .
\end{equation}
By the assumptions, we have $\Omega(1) + 3\lg n \le \lg N$  and $\lg  r \le  \lg (r+k-1) \le \lg k + \lg\lg n +\cO(1)$.
Substitute these into~\eqref{eqn:rename-lower-bound} to obtain a bound
\[
\frac{\lg N-\lg (r +k-1)}{\lg r + 1} \ge\frac{\lg n +\Omega(1)}{\lg k+ \lg\lg n +\cO(1)}
\ ,
\]
which is $\Omega\bigl(\frac{\log n}{\log k+\log\log n}\bigr)$.
\end{proof}

Corollary~\ref{cor:lower-bound-k-N-storing} implies that the algorithm for Store\&Collect for the case when both $k$ and $N$ are known, and whose performance characteristics are given in Theorem~\ref{thm:collect-k-N-both-known}, may miss local-step optimality of first storing by a factor that is about $\log ^2 k$.
More precisely, assuming additionally that $k=\Omega(\log n)$, the lower bound of Theorem~\ref{thm:store-collect-lower-bound} becomes $\Omega\bigl(\frac{\log n}{\log k}\bigr)$. 
Each process storing for the first time  performs  $\cO(\log k (\log N + \log k\log^* \frac{N}{k}))$ steps, by Theorem~\ref{thm:collect-k-N-both-known}. 
We have the following asymptotic  identity $\log ^2 k \cdot \Omega\bigl(\frac{\log n}{\log k}\bigr) =\Omega(\log k (\log N + \log k\log^* \frac{N}{k}))$, under the additional assumption about $k$ that $\log k\cdot \log^* \frac{N}{k} =\cO(\log n)$.

\section{Unbounded Selection}

We consider problems concerning processes selecting positive integers continuously such that each selection is exclusive.
A selected positive integer may be considered as an abstract name from an unbounded range.
Such names can be used to identify registers from an infinite pool of registers.
Infinitely repeated selections of names are efficient when they minimize the number of positive  integers that  never get selected to be assigned as names.

Next, we review the functionality of a distributed dynamic data structure that we call a ``repository.''
We will refer to two related concepts of ``storing'' and ``depositing'' a value in a register.
A value written to a register gets stored in it as long as it is not overwritten by a different value.
Depositing a value means  storing it indefinitely in some register.
A repository is a concurrent data structure for depositing values in shared read-write registers.
The underlying assumption is that, for each point in time, each process will eventually generate a value to be deposited in a repository.

We assume that there are infinitely many shared read-write registers, denoted $R[1], R[2], R[3],\ldots$, used to store deposited values.
We say that these registers are \emph{dedicated} to depositing and that $i$ in the \emph{index of register~$R[i]$}.
We assume random access to these shared registers, in that an index~$i$ allows to identify the shared register~$R[i]$ and perform a read or write on it.
Every register~$R[i]$ is initialized to \texttt{null}, which is interpreted as the register being empty. 
An algorithm managing repeated deposits may also use additional auxiliary registers.

A formal definition of depositing and a repository refers to an algorithm implementing this operation and supporting registers.
Consider an execution of an algorithm implementing depositing.
A value $x$ is considered \emph{deposited in a register $R$} at an event in the execution, when the following is satisfied in the system's configuration just after the event:
\begin{enumerate}
\item
The value $x$ is stored in register~$R$. 

\item
The value $x$ will not be overwritten in register $R$  by any different value in the following events of the execution.
\end{enumerate}
A repository is a distributed data structure that provides a repertoire of operations for each participating process.
We describe these operations next.

A process~$p$ may invoke an operation $\texttt{Query}_p$ to obtain a new value to be deposited.
An event $\texttt{Return}_p(v)$ returns a value~$v$ for $\texttt{Query}_p$, where $v\ne \texttt{null}$ is a value to deposit.
An event  $\texttt{Return}_p(\texttt{null})$  indicates that there is no value to deposit yet.

A process~$p$ invokes an operation \texttt{Deposit}$_p(v)$ to deposit a value~$v\ne  \texttt{null}$.
An event \texttt{Ack}$_p(i)$ acknowledges completing \texttt{Deposit}$_p()$, where $R[i]$ is the register in which the value has been deposited.
When a process crashes while working to deposit a value, and depositing this value has not been acknowledged before the crash, then the value may either get deposited or not in a dedicated register.

Following the standard understanding of simulating executions, as presented by Attiya and Welch~\cite{Attiya-Welch-book2004}, we prohibit ``pipelining'' on the operations \texttt{Query} and \texttt{Deposit}.
This means that a process may invoke a new operation only after the previous one, if any, has returned an outcome or has been acknowledged as completed.
When a process~$p$ obtains a return of a query for a new value, then the process  eventually invokes \texttt{Query}$_p$ again.
We assume \emph{fair occurrence  of deposit requests} at processes, which means that each process eventually obtains a new value to deposit, after having deposited the previous value, if any, unless the process crashes.

Let us observe that no algorithm depositing values and resilient to even one crash can guarantee for a specific register that a value gets deposited in the register eventually, since otherwise the value stored there could be used to determine  a decision in a solution of Consensus, which is impossible in asynchronous systems with processes prone to crashes and read-write registers~\cite{Attiya-Welch-book2004,HerlihyKozlovRajsbaum-book,Lynch-book96}.
This means that for any execution of a depositing algorithm, some registers dedicated for depositing may never be used for deposits.
Now the question is how many registers dedicated for depositing will never be used to deposit?
There is a simple solution to the problem of repeated deposits in which a process~$i$ deposits only in consecutive registers with indices congruent to~$i$ modulo~$n$.
The problem with this approach is that if even one process crashes then infinitely many registers dedicated for depositing will never deposit a value.
We want to have a solution to repeated depositing in which the number of dedicated registers not used for deposits is finite in any infinite execution.

A \emph{repository} is a concurrent data structure that allows each process to deposit  values in dedicated registers that satisfied the following two properties, subject to fair occurrence  of deposit requests  in an infinite execution:
\begin{description}
\item[\sf Persistence:]  
Starting from an acknowledgment event  \texttt{Ack}$_p(i)$, for a process~$p$ and register $R[i]$ dedicated for depositing, the value stored in~$R[i]$ by $p$ becomes deposited. 
\item[\sf Utilization:] 
There are finitely many registers dedicated for depositing that never store a deposited value.
\end{description}

We want the quality of an implementation of a repository to be be minimally non-blocking, but preferably wait-free.
These qualities have the following standard meaning:
\begin{description}
\item[\sf Non-blocking:] 
If at least one nonfaulty process wants to deposit a value in a configuration, then eventually a value gets deposited.
\item[\sf Wait-free:] 
If some nonfaulty process wants to deposit a value in a configuration, then eventually this process succeeds in depositing its value.
\end{description}
The Repository problem is to implement a repository in a distributed system of $n$ processes prone to crashes and an unbounded supply of read-write registers initialized to \texttt{null}.

%: paragraph implementations of a repository

\Paragraph{Implementations of a repository.}

The algorithms we give next assign integers to processes such that an assignment provides exclusivity of a selection of an integer by a process.
We interpret a newly assigned integer~$i$ as an indication that the register~$R[i]$ could be  available for depositing.

We start from the algorithm called \textsc{Selfish-Repository}, which works as follows.
Each process~$p$ maintains a sorted list $L_p$ of $2n-1$ indices $i$ in its local memory, which are interpreted as identifying registers possibly still available for deposits.
The list is initialized to store the first $2n-1$ positive integers arranged in order in consecutive entries.
Pointer $L_p.\texttt{head}$ points to the first item on the list.
For an entry $x$ in the list, $x.\texttt{next}$ is the next item in the list, and $x.\texttt{value}$ is the integer stored in the entry.
Process~$p$ also  uses a variable~$A_p$ interpreted as an index of the next available empty register immediately after the registers whose indices are in~$L_p$.
The variable~$A_p$ is initialized to~$2n$.
The entry at position $i$ in list~$L_p$ of process~$p$ is denoted $L_p[i]$ (or $L[i]$ when $p$ is understood).

Process~$p$ may \emph{update list~$L_p$}, which is performed as follows.
Process~$p$ scans~$L_p$, starting from $L_p.\texttt{head}$.
For each entry $x$ scanned in the list, if $x.\texttt{value}=j$, then $p$ reads~$R[j]$. 
If $R[j]$ is still empty, then the next entry $x.\texttt{next}$ is considered, otherwise $p$ removes entry~$x$ from  list $L_p$, by making the predecessor of $x$ point to its successor $x.\texttt{next}$, and begins scanning registers~$R[i]$ one by one in the order of indices, starting from the index~$i$ stored in~$A_p$.
The scan of array~$R$ continues until an empty~$R[k]$ is found, if any.
Once process~$p$ reads an empty $R[k]$ then $p$ appends a new entry $y$ with $y.\texttt{value}=k$  to~$L_p$, and sets~$A_p$ to~$k+1$.
This completes processing entry~$x$.
Next the entry $x.\texttt{next}$ in~$L_p$ is processed in a similar manner.
Updating list~$L_p$ ends after all the entries of~$L_p$ have been processed.

We will use an atomic-snapshot object~$\cS$. 
It includes read-write registers~$S[p]$, for each process~$p$, for $1\le p\le n$, such that $S[p]$ is writable by~$p$ and readable by all processes.
A view returned to a process that invoked taking a snapshot consists of a vector $V=(V[1],V[2],\ldots,V[n])$, where $V[i]$ stores the value read from~$S[i]$.
Each register~$S[i]$, for $1\le i\le n$, is initialized to \texttt{null}.
The registers~$S[i]$ are supported by other auxiliary registers in~$\cS$ to equip~$S$ with the functionality of an atomic snapshot object, see~\cite{AfekADGMS93}.
Process~$p$ writes an integer in the interval $[1,2n-1]$ to the register~$S[p]$ in~$\cS$.
After taking a snapshot using~$\cS$, process~$p$ assigns itself the \emph{rank} defined as follows: it is the rank of $p$ among the indices~$q$ such that the variable~$V[q]$ stores an entry different from \texttt{null}.
A snapshot determines integers in the interval $[1,2n-1]$ that are \emph{available}: these are the numbers in the interval that do not occur in the snapshot.
For each snapshot with at least one repetition of entries there are always at least $n$ integers available: this is because at most $n-1$ numbers in $[1,2n-1]$ occur in the snapshot.

The need for a deposit occurs when a process~$p$ that has been querying for a new value to deposit obtains such a value.
Then process~$p$ begins attempts to acquire an index of a register available for deposits.
Such attempts are performed repeatedly until identifying an index of a register~$i$ exclusively available to~$p$, we call such $i$ a \emph{list name}.
Each attempt begins with $p$ reviewing its list~$L_p$ and setting the \emph{candidate} index, of an eligible register dedicated to depositing, to~$L_p[1]$.
The first goal is to go through a sequence of candidates to eventually identify a list name, while verifying each candidate until it passes the verification and becomes a list name.

Identifying a candidate and verification proceed as follows. 
After $p$ sets the candidate to~$L[1]$, it  repeats the following three actions.
First $p$ sets $S[p]$ to the candidate index.
Then $p$ invokes obtaining a view~$V$ from the snapshot object~$\cS$.
The  third action depends on whether the candidate is unique in the view~$V$.
If this is the case then $p$ treats the candidate index in $L$ as a list name produced by the list.
Otherwise, $p$ sets $r$ to the rank of $p$ in $V$, then sets $j$ to the $r$th integer available in $[1..2n-1]$, and finally chooses a next candidate as the entry on $L$ at position~$j$.
Now $p$ resumes verifying the candidate.
When process~$p$ acquires a list name~$j$ then it checks to see if $R[j]$ is empty.
If this is the case then $p$ deposits at~$R[j]$.
Otherwise, when $R[j]$ already stores a deposited value then $p$ resumes attempts to acquire an index of an eligible register. 

A pseudocode of algorithm \textsc{Selfish-Deposits} is presented in Figure~\ref{fig:alg-selfish-deposit}.
In the pseudocode, the names of variables \texttt{candidate} and \texttt{list-name} half self-explanatory meaning.
The pseudocode is structured as a repeat loop~(1.), which terminates by return of acknowledgement in the last line~(\ref{sel-dep:ack}).

%: figure algorithm Selfish Deposit

\begin{figure}[t]
\hrule

\FF

\textsf{algorithm} \textsc{Selfish-Repository}

\FF

\hrule

\FF

\begin{enumerate}[nosep]

\item
\texttt{repeat}

\begin{enumerate}[nosep]
\item
\label{sel-dep-update}
update list $L$; 
$\texttt{list-name}\leftarrow \texttt{null}$;
$\texttt{candidate}\leftarrow L[1]$; 
\item
\texttt{repeat}
\begin{enumerate}[nosep]
\item
$S[p]\leftarrow \texttt{candidate}$
\item
obtain a view $V$ from snapshot object $\cS$
\item
\texttt{if} \texttt{candidate} is unique in the view $V$ \texttt{then} $\texttt{list-name}\leftarrow \texttt{candidate}$ \texttt{else} 
\begin{enumerate}[nosep]
\item
$r\leftarrow$ the rank of $p$ in $V$ 
\item
$j\leftarrow$  the $r$th integer available in $[1..2n-1]$
\item
$\texttt{candidate}\leftarrow L[j]$
\end{enumerate}
\end{enumerate}
\texttt{until} $\texttt{list-name} \ne \texttt{null}$
\label{sel-dep-verify-null}
\item
\texttt{if} $R[\texttt{list-name}]=\texttt{null}$  \texttt{then}
\begin{enumerate}[nosep]
\item 
\label{sel-dep-store}
$R[\texttt{list-name}]\leftarrow v$
\item 
$S[p]\leftarrow\texttt{null}$

\item 
\label{sel-dep:ack}
\texttt{return} \texttt{Ack}(\texttt{list-name})
\end{enumerate}
\end{enumerate}
\end{enumerate}
\FF

\hrule

\FF

\caption{\label{fig:alg-selfish-deposit}
Pseudocode of procedure \texttt{Deposit}$(v)$, for a process~$p$, implementing a selfish repository.}
\end{figure}

%: theorem selfish deposit

\begin{theorem}
\label{thm:selfish-deposit}

Depositing based on algorithm \textsc{Selfish-Deposit} is a non-blocking implementation of a repository such that in each execution at most $2n-2$ dedicated deposit registers are not  used for depositing.
\end{theorem}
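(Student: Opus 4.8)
The plan is to verify the two required properties separately: the Persistence invariant (Repository correctness) and the Non-blocking progress property, and then to bound the number of dedicated registers that remain unused. First I would establish Persistence. Each register $R_i$ dedicated for depositing is written to only by a process that just verified, via a snapshot of $W$, that the entry $i$ is unique in the snapshot and then read $R_i$ as empty. The key claim is that at most one process ever writes (a deposit value) into a given $R_i$: if two processes $p$ and $q$ both found $i$ unique in their respective snapshots, linearize their snapshots; the earlier one, say $p$, had $i$ at $W_p$ and no other occurrence of $i$, and before $p$ wrote the deposit value it read $R_i$ empty, but the later snapshot by $q$ must have observed $i$ at $W_p$ (since $p$ keeps $i$ at $W_p$ until it either deposits or verifies its list) — contradicting uniqueness for $q$. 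Once $p$ writes its deposit value into $R_i$ and issues \texttt{ack}$_p(R_i)$, no process ever writes there again, because any later would-be writer would first read $R_i$ as nonempty; hence Persistence holds. I would also note that the list-verification procedure only removes an index $j$ from $L_p$ after reading $R_j$ nonempty, so indices of already-used registers are pruned and never reproposed, which is what keeps the "choosing by rank" argument consistent.

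Next I would prove the Non-blocking property. Suppose some nonfaulty process $p$ wants to deposit and consider a suffix of the execution in which all processes that will ever crash have done so; let $P$ be the set of processes still active (all working to acquire a name and deposit). Running under an atomic snapshot object, the standard Attiya–Bar-Noy–Dolev–style argument applies: after the last crash, each active process repeatedly writes an entry of its list to $W$ and snapshots. I would argue that the active processes eventually reach a configuration where the entries they hold in $W$ stabilize, and among those, the one holding the smallest entry — or, when ranks are used, the one whose chosen-by-rank entry is forced unique by the length-$2n-1$ list — sees its value as unique in a snapshot, reads the corresponding $R_i$ empty (this register was never used, since a used index would have been pruned by verification), deposits, and acknowledges. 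The "choosing by rank" mechanism is exactly designed so that when a process's proposed value collides, it can always pick an entry of its list that is not currently in the snapshot (the worst-case counting, $2n-2$ entries to skip, shows $2n-1$ suffices), so no active process is ever stuck without a legal move toward a unique entry; thus progress is made infinitely often and in particular $p$ eventually deposits.

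Finally, the bound of $n-1$ unused dedicated registers. At any point the union of the active processes' lists $\bigcup_{p} L_p$ together with the registers already used for deposits covers an initial segment of the index space except for at most a bounded number of "holes": initially $L_p=\{1,\ldots,2n-1\}$ for all $p$ and the pointers $A_p$ advance only past registers found nonempty. I would argue the invariant that the set of indices that are neither deposited-into nor currently present in some active list has size at most $n-1$: each of the $n$ processes "carries" a window of $2n-1$ candidate indices, and whenever a register becomes used it is pruned from lists and the pointer scans forward appending the next empty one, so no gap of unused-and-unclaimed registers can grow beyond what the $n-1$ crashed-or-slow processes can sequester. In the limiting execution where all but one process crash, the crashed processes can have "reserved" at most $n-1$ indices in their final proposals/lists that will never be revisited, and every other dedicated register is eventually reached by the surviving process's pointer $A_p$ and used; hence at most $n-1$ dedicated registers are never used for depositing.

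The main obstacle I expect is the Non-blocking argument: showing that the combination of proposing the smallest entry, snapshotting, list-verification (which can reset the proposal), and choosing-by-rank on collisions cannot produce a livelock among the active processes. This requires a careful potential/stabilization argument over the atomic-snapshot object — identifying a quantity (e.g., the lexicographically least multiset of proposed entries, or a rank-based measure) that cannot decrease forever without someone depositing — rather than the essentially mechanical Persistence proof. The second delicate point is making the $n-1$ bound tight and not merely $O(n)$; that needs the precise accounting that each non-progressing process blocks at most one index from being recycled.
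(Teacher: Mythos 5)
Your Persistence argument and your final accounting of the $n-1$ wasted registers are essentially sound and match the paper's reasoning (the paper attributes the $n-1$ bound to the at most $n-1$ entries a process must skip when choosing by rank --- in the worst case the final proposals of crashed processes --- which is the same ``each non-progressing process blocks one index'' idea you end with). One caveat on Persistence: your ``key claim'' that at most one process ever finds $i$ unique in a snapshot is not what actually protects $R_i$. Since a depositor later overwrites $W_p$ with new proposals, the value $i$ can become unique in a later snapshot taken by a different process; the paper explicitly notes this and relies on the subsequent emptiness check of $R_i$ as the real guard. You do include that check, so the conclusion stands, but the linearization claim as stated is false and should be dropped in favor of the emptiness check.

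The genuine gap is exactly where you flagged it: the Non-blocking property. You say you ``would argue'' that the proposals stabilize and that some chosen-by-rank entry is ``forced unique by the length-$2n-1$ list,'' but you never supply the argument, and the $2n-1$ counting you invoke only shows that choosing by rank is \emph{well defined} (a process can always find an entry of its required rank), not that distinct processes choose \emph{distinct} entries. The missing chain of reasoning, which is the heart of the paper's proof, is a proof by contradiction: assume no deposit ever occurs after some event; then, because every non-faulty process keeps running list verification and no register changes from empty to nonempty, all lists $L_p$ eventually converge to the \emph{same} list, namely the indices of the $2n-1$ smallest empty registers; once every non-faulty process has written an entry of this common list to $W$, the ranks (computed from the snapshot restricted to entries of the current list) become fixed and pairwise distinct; choosing by rank on a common list with distinct fixed ranks yields pairwise distinct proposals; hence some process's proposal is unique in its next snapshot, it reads the (empty) register and deposits --- contradicting the assumption. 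Without the list-convergence and rank-stabilization steps, your sketch does not rule out a livelock in which colliding proposals chase each other forever, which is precisely the obstacle you yourself identify as unresolved.
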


\begin{proof} 
When a value $v$ gets stored in a register~$R[\texttt{list-name}]$ by a process~$p$ in instruction~(\ref{sel-dep-store}) in Figure~\ref{fig:alg-selfish-deposit}, then this occurs after \texttt{list-name} has been verified to be unique in the view returned by the snapshot object. 
This entry written by $p$ in~$S[p]$ stayed there until after $p$ completed writing to $R[\texttt{list-name}]$.
This means that at most one process could attempt to store a value in this register $R[\texttt{list-name}]$.
The first such process would succeed, as all subsequent attempts, if any, would be prevented by the verification in line~(\ref{sel-dep-verify-null}) of the pseudocode.
This provides the property of Persistence defining a repository.

Next we argue that there are infinitely many successful deposits, assuming fair occurrence  of deposit requests. 
Suppose there is an event after which no deposits occur.
Eventually every process either has crashed or it has a value to deposit, by the assumed fair occurrence of deposit requests.
Each failure to deposit starts a new iteration of the main repeat loop in Figure~\ref{fig:alg-selfish-deposit}, which begins with updating list~$L$ by instruction~(\ref{sel-dep-update}).
As all the non-faulty processes keep updating the lists, while no deposits occur, then eventually all their lists become equal and store the indices of the smallest empty deposit registers.
The values on this list make a set of $2n-1$ natural numbers. 
Let us take the first event when this occurs.
Consider the first following event when each non-faulty process~$p$ has written to~$S[p]$.
Starting from this point in the execution, the ranks of all the processes become fixed in the snapshot object.
Consider the subsequent writes to $S[p]$ by a process~$p$.
If such a write does not produce a unique number in the view, then each next write of a proposed list name is after choosing by rank.
Once the ranks become fixed, each choosing by rank produces a unique entry in the common list~$L$ shared by all the processes.
It follows that eventually some non-faulty process~$p$ acquires a list name.
Now this list name identifies a unique entry in the list $L$ shared by all the processes.
The register dedicated for deposits with the index in this unique entry of $L$ is empty and no other process attempts to use this register, so $p$ completes a deposit.
This contradicts the assumption that there are no deposits after some event.

If a process crashes in the course of depositing, then the process may have identified  a register~$R[k]$ for depositing by acquiring a list name~$k$, but the crash occurred before the value got stored in the register~$R[k]$.
There may be up to $n-1$ such indices~$k$ and the corresponding registers~$R[k]$.
A crashed process~$p$ may have set $S[p]$ in the snapshot object to its candidate, which now will store its value~$S[p]$ throughout the execution.
If such registers $S[p]$ make an initial segment $S[1..k]$, then eventually the first $k$ numbers in lists~$L$ stabilize and each of them is a list name assigned to a crashed process. 
If the next $k$ entries in the lists~$L$ stabilize as well, then these will forever stay as the first $k$ available indices never to be assigned as list names.
Since there are at most $n-1$ crashes and $k\le n-1$,  we have that up to $n-1$ indices of registers dedicated for deposits may never be used as list names.
This means that up to $2n-2$ registers dedicated for depositing may never store a deposited value.
\end{proof}

%: theorem nonblocking deposit

\begin{theorem}
\label{thm:non-blocking-repository}

For each non-blocking algorithm implementing a repository there exists an execution in which $n-1$ registers dedicated for depositing remain unused. 
\end{theorem}

\begin{proof}
We argue that in each implementation of a repository, at least $n-1$ dedicated registers may be never used for deposits in some execution.
Namely, when a process~$p$ is to deposit by writing to a register~$R[i]$, and a write event to store the value is enabled, we may ``freeze'' the write.
At this point, no  other process~$q$ will want to deposit to $R[i]$, because otherwise after \texttt{ack}$_q(i)$ happens, the pending write of $p$ to store at $R[i]$ might occur as well, which results in overwriting~$R[i]$, in contradiction of the definition of a repository.
This means that if $p$ crashed rather than merely ``freezed,'' then the register $R$ is never used for depositing by any other process. 
Up to $n-1$ crashes can happen, so at least these many registers might never be used for deposits.
\end{proof}

By Theorem~\ref{thm:selfish-deposit}, algorithm \textsc{Selfish-Deposit} leaves  $\cO(n)$ registers dedicated for deposits that remain unused.
This combined with Theorem~\ref{thm:non-blocking-repository} shows that algorithm \textsc{Selfish-Deposit} leaves an asymptotically optimum number of shared registers in the worst case in a perpetual state of not storing a deposited value.

Next, we consider a wait-free implementation of a repository.
We call the algorithm that provides the implementation \textsc{Altruistic-Deposit}.
The algorithm uses some of the mechanisms in  \textsc{Selfish-Deposit} and extends them.
The difference between the two algorithms is what a process~$p$ does with acquired list names.
In algorithm \textsc{Selfish-Deposit}, a process acquiring list names uses it selfishly as address of registers to deposit.
In executions of algorithm \textsc{Altruistic-Deposit}, processes share acquired list names  with other processes to help in their deposits.

Algorithm \textsc{Altruistic-Deposit} consists of two threads. 
One auxiliary thread produces register indices, and the other thread  deposits values.
The two threads are interleaved in a fair manner, in that each process alternates invoking  instructions from the two threads.
Both threads work on an $n\times n$ array \texttt{Help}$[i,j]$, for $1\le i,j\le n$, of shared read-write registers.
The processwa use a snapshot object~$\cS$ to obtain new list names, similarly as in algorithm \textsc{Selfish-Deposit}.
A process~$i$ writes  a verified list name  into \texttt{Help}$[i,j]$  to be used by process~$j$ for its deposits.

%: figure algorithm Altruistic Repository produce thread 

\begin{figure}[t]
\hrule

\FF

\textsf{algorithm} \textsc{Altruistic-Repository} : producing thread

\FF

\hrule

\FF

\begin{itemize}[nosep]
\item[]
$\texttt{column}\leftarrow p$
\item[]
\texttt{repeat}
\begin{enumerate}[nosep]
\item
\texttt{if} $\texttt{Help}[p,\texttt{column}]\ne \texttt{null}$ \texttt{then} increment \texttt{column} in a round robin manner \texttt{else}
\begin{enumerate}
\item
update list $L$; 
$\texttt{list-name}\leftarrow \texttt{null}$;
$\texttt{candidate}\leftarrow L[1]$; 
\item
\texttt{repeat}
\begin{enumerate}[nosep]
\item
$S[p]\leftarrow \texttt{candidate}$
\item
obtain a view $V$ from snapshot object $\cS$
\item
\texttt{if} \texttt{candidate} is unique in the view $V$ \texttt{then} $\texttt{list-name}\leftarrow \texttt{candidate}$ \texttt{else} 
\begin{enumerate}[nosep]
\item
$r\leftarrow$ the rank of $p$ in $V$ 
\item
$j\leftarrow$  the $r$th integer available in $[1..2n-1]$
\item
$\texttt{candidate}\leftarrow L[j]$
\end{enumerate}
\end{enumerate}
\texttt{until} $\texttt{list-name} \ne \texttt{null}$

\end{enumerate}
\item
\label{altruistic-verify-availability}
\texttt{if} $R[\texttt{list-name}]=\texttt{null}$ \texttt{then} 
\begin{enumerate}[nosep]
\item
$R[\texttt{list-name}]\leftarrow \texttt{reserved}$
\item
$\texttt{Help}[p,\texttt{column}]\leftarrow\texttt{list-name}$
\end{enumerate}
 \item 
$S[p]\leftarrow\texttt{null}$
\end{enumerate}
\end{itemize}
\FF

\hrule

\FF

\caption{\label{fig:alg-altruistic-repository-produce}
Pseudocode of a producing thread in the altruistic repository, used by a process~$p$ to obtain an index of a register available for depositing.
Such an index gets stored as entry $\texttt{Help}[p,\texttt{column}]$ of the column $\texttt{Help}[p,*]$ of the array \texttt{Help}.}
\end{figure}

A process~$p$ in the producing thread keeps reading the registers in row \texttt{Help}$[p,*]$ in a round-robin manner starting from the diagonal entry. 
If $p$ finds some register \texttt{Help}$[p,c]$ equal to \texttt{null} then $p$ works to obtain a new list name.
After successfully acquiring a list name~$i$, process~$p$ verifies if $R[i]$ is empty, which means it has not been reserved yet.
If this is the case then process~$p$ first marks $R[i]$ as \texttt{reserved} and then writes~$i$ into~\texttt{Help}$[p,c]$.
The value \texttt{reserved} is assumed to be different from \texttt{null}.
A pseudocode of the producing thread is in Figure~\ref{fig:alg-altruistic-repository-produce}.

%: lemma altruistic deposit nonempty column

\begin{lemma}
\label{lem:column-nonempty}

For each process~$p$ and an event it is involved, eventually some entry in the column $\texttt{Help}[*,p]$ stores an index of a reserved register available for deposits. 
\end{lemma}

\begin{proof}
Processes execute producing threads similarly as depositing selfishly in that this produces new reserved registers in a non-blocking manner.
As new entries in the array $\texttt{Help}[*,*]$ get \texttt{reserved} written in them, the writers wrap around their rows in a round robin manner.
The perpetual existence of a column of the array $\texttt{Help}[*,p]$ with all entries \texttt{null} would contradict the non-blocking progress achieved in the execution.
If a process~$q$ obtains a new \texttt{list-name} in an execution of producing thread, then this number is unique in the view provided by the snapshot object. 
This means that no write to $R[\texttt{list-name}]$ by some other process is pending when $q$ reads $R[\texttt{list-name}]$ and finds it empty in instruction \eqref{altruistic-verify-availability} in Figure~\ref{fig:alg-altruistic-repository-produce}, and so eligible to write \texttt{reserved}.
\end{proof}

A process~$p$ in the depositing thread keeps reading the column \texttt{Help}$[*,p]$ in a round-robin fashion, starting from the diagonal entry.
Once $p$ finds an index $j\ne \texttt{null}$ stored at \texttt{Help}$[r,p]$, then $p$ deposits in~$R[j]$ and then writes \texttt{null} to erase value~$j$ in \texttt{Help}$[r,p]$.
A pseudocode of the depositing thread is in Figure~\ref{fig:alg-altruistic-repository-deposit}.

%: figure algorithm Altruistic Repository deposit thread 

\begin{figure}[t]
\hrule

\FF

\textsf{algorithm} \textsc{Altruistic-Repository} : depositing thread

\FF

\hrule

\FF

\begin{enumerate}[nosep]
\item
$\texttt{row}\leftarrow p$
\item
\texttt{while} $\texttt{Help}[\texttt{row},p]=\texttt{null}$ \texttt{do}
increment \texttt{row} in a round robin manner
\item
$\texttt{index}\leftarrow \texttt{Help}[\texttt{row},p]$
\item
$\texttt{Help}[\texttt{row},p]\leftarrow \texttt{null}$
\item 
$R[\texttt{index}]\leftarrow v$
\item
\texttt{Ack}$(\texttt{index})$
\end{enumerate}
\FF

\hrule

\FF

\caption{\label{fig:alg-altruistic-repository-deposit}
Pseudocode of a depositing thread, used by a process~$p$ to deposit a value~$v$, implementing an altruistic repository.}
\end{figure}

%: theorem altruistic deposit

\begin{theorem}
\label{thm:altruistic-deposit}

Depositing based on algorithm \textsc{Altruistic-Deposit} is a wait-free implementation of a repository such that at most $(n+2)(n-1)$ dedicated deposit registers will never be used for depositing.
\end{theorem}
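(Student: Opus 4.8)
The plan is to treat \textsc{Altruistic-Deposit} as \textsc{Selfish-Deposit} equipped with an extra routing layer, so that Theorem~\ref{thm:selfish-deposit} and its proof can be reused wholesale for everything concerning name acquisition, leaving only the effect of the array \texttt{Help} to be analyzed from scratch. Under this split there are three things to establish: \textsf{Persistence}, wait-freeness, and the bound $n(n-1)$ on unused dedicated registers.

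\emph{Persistence.} First I would observe that the argument from the proof of Theorem~\ref{thm:selfish-deposit} already shows that each dedicated register $R_x$ is written at most once: a deposit into $R_x$ occurs only after the emptiness re-check of $R_x$ succeeds, and a given index is proposed uniquely by at most one process at a time. The only new subtlety is that the acquirer of a name $x$ does not deposit itself but writes $x$ into some \texttt{Help}$[p,q]$ and then drops $x$ from $L_p$. I would check this cannot let $x$ be used twice: $p$ writes to \texttt{Help}$[p,q]$ only when it reads \texttt{null} there, and since $p$ no longer keeps $x$ on $L_p$ it never re-proposes $x$; any residual race of two processes concurrently proposing the same index is killed, exactly as in \textsc{Selfish-Deposit}, by the recipient's re-check of $R_x$ before depositing. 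Hence after \texttt{ack}$(R)$ no write to $R$ ever occurs.

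\emph{Wait-freeness.} This is the property \textsc{Altruistic-Deposit} adds over \textsc{Selfish-Deposit}, so it is the core of the argument. Fix a non-faulty process $p$ with a pending \texttt{Deposit}; it interleaves sweeps of its row \texttt{Help}$[p,*]$ (refilling \texttt{null} slots with freshly acquired names) with sweeps of its column \texttt{Help}$[*,p]$ (picking up a name to deposit). The key fact, which I would extract from the choosing-by-rank analysis inside the proof of Theorem~\ref{thm:selfish-deposit}, is that the name-acquisition procedure is not merely non-blocking but lets every non-faulty contender acquire a fresh name infinitely often: after the finitely many crashes of the execution have occurred, the set of contenders and their ranks stabilize, and from then on each choosing-by-rank step yields a distinct unique entry for each contender. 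In particular $p$ itself eventually acquires some name $x$, writes it into \texttt{Help}$[p,p]$, and on its next sweep of the column reads $x$, finds $R_x$ empty (no other process was handed $x$), deposits, and acknowledges --- all within finitely many of $p$'s own steps and regardless of the behaviour of the others. Thus the implementation is wait-free; helping other processes is irrelevant to $p$'s own termination and serves only to keep more dedicated registers in use.

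\emph{Counting unused registers, and the main obstacle.} A register $R_x$ is never used only if the name $x$, once acquired and routed into some \texttt{Help}$[p,q]$, is never picked up, which forces the recipient $q$ to have crashed before clearing that slot (a non-faulty recipient eventually reads and deposits every name placed for it). For each ordered pair $(p,q)$ with $p\ne q$, at most one name can be trapped in \texttt{Help}$[p,q]$, since $p$ writes there only while it holds \texttt{null} and, once $q$ has crashed, it is never cleared again; the diagonal slots contribute nothing, as a process is there both producer and consumer and either deposits or leaves the slot \texttt{null}. Summing over the $n(n-1)$ off-diagonal pairs gives the claimed bound. The delicate point --- and the step I expect to be the main obstacle --- is making this count airtight: one must preclude double counting, handle the case where both a trapped name's would-be recipient and its producer have crashed, and, most importantly, verify that the $\le n-1$ names that the renaming layer may perpetually skip (the source of the $n-1$ bound in Theorem~\ref{thm:selfish-deposit}) are eventually re-inserted into some list and re-routed, so that they do not push the total past $n(n-1)$.
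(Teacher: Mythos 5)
Your decomposition into persistence, wait-freeness, and counting matches the paper's, and your persistence argument is fine, but both load-bearing steps have problems. For wait-freeness you claim that the name-acquisition layer ``lets every non-faulty contender acquire a fresh name infinitely often'' and conclude that a process with a pending deposit can always serve itself through \texttt{Help}$[p,p]$, so that ``helping other processes is irrelevant to $p$'s own termination.'' This cannot be the right route: Theorem~\ref{thm:selfish-deposit} establishes only that the acquisition layer is \emph{non-blocking}, and if it were per-process wait-free as you assert, then \textsc{Selfish-Deposit} would already be wait-free and the \texttt{Help} array would be pointless. An individual process can be starved in the acquisition loop --- its proposed index can repeatedly turn out to correspond to an already-filled register because other processes keep depositing and its list keeps going stale --- and your appeal to rank stabilization is borrowed from an argument that assumes, for contradiction, that no further deposits occur, so it does not apply to an execution with infinitely many deposits. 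The paper's argument runs in the opposite direction: global non-blocking progress means \emph{some} process acquires names infinitely often, any such process sweeps its row cyclically and therefore eventually fills the \texttt{null} slot in column $q$, and $q$ then picks that name up on its own column sweep. The helping is exactly what delivers wait-freeness.

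The counting also has a flaw. You assert that diagonal slots ``contribute nothing,'' but a process $p$ can write a freshly acquired name into \texttt{Help}$[p,p]$ and crash before consuming it, trapping that name; with the diagonal included, your per-slot count would give $n(n-1)+(n-1)$ rather than $n(n-1)$. The correct accounting is by columns: a slot traps a name only if its \emph{recipient} crashes, there are at most $n-1$ crashed recipients, and each contributes a column of $n$ slots, for $n(n-1)$ in total --- this is the extremal scenario the paper describes, with all $n^2$ slots full, $n-1$ crashes, and the survivor consuming only its own column. Finally, you are right to flag that the at most $n-1$ perpetually-skipped indices from the renaming layer are not obviously absorbed into this count; the paper's proof is silent on that point as well, so I do not hold the unresolved flag against you, but as written your proof does not close it.
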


\begin{proof}
Consider an event in which a process~$q$ wants to deposit a value~$v$.
The process invokes the depositing thread and so keeps reading entries in the column $\texttt{Help}[*,q]$ in a round-robin manner.
By Lemma~\ref{lem:column-nonempty}, process~$q$ eventually reads $\texttt{Help}[r,q]=j$,  for some row~$r$ and index~$j$. 
The register~$R[j]$ was verified to be empty by the process~$p$ that wrote~$j$ to $\texttt{Help}[r,q]$, which occured when executing line~\eqref{altruistic-verify-availability} in Figure~\ref{fig:alg-altruistic-repository-produce}.
Process~$q$ can safely store the value~$v$ in~$R[j]$, because while the entry $\texttt{Help}[r,q]$ stays equal to index~$j$, register~$R[j]$ stays equal to \texttt{reserved}, which is different from \texttt{null}.
This prevents the index~$j$ to be written at other locations of the array \texttt{Help}, by instruction~\eqref{altruistic-verify-availability} in Figure~\ref{fig:alg-altruistic-repository-produce}, and so prevents multiple values to be possibly stored in succession at~$R[j]$.

Next, we estimate the number of registers dedicated for depositing that may never be used to deposit a value.
This number if maximized when $n-1$ processes crash while many entries of  \texttt{Help}$[*,*]$ store indices of  registers reserved for depositing.
Let $p$ be the only process that never crashes.
The worst case scenario occurs when each of the crashed processes~$q$ has a full column of~$n$ indices reserved and it crashes when working to produce an index of a register to be placed in column $\texttt{Help}[*,p]$.
Such a process~$q$ may have written some list name~$i$ to $S[q]$ and verified that $R[i]=\texttt{null}$ but crashed before setting $R[i]$ to \texttt{reserved} and so also did not reset $S[q]$ to~\texttt{null}.
Suppose all the lists $L$ stabilized to the same sequence of entries, and further that such indices $i$ make the first $n-1$ entries in the lists. 
Then the first $n-1$ entries will never be removed from the lists along with the first $n-1$ entries available in the lists~$L$.
The registers with theses $2n-2$ indices will stay equal to \texttt{null} forever.
We have obtained $n(n-1)$ reserved registers and $2(n-1)$  empty registers never to be used for depositing.
The total number of registers dedicated for depositing that never store a deposited value could be $n(n-1) + 2(n-1) = (n+2)(n-1)$.
\end{proof}

It is an open problem if there exists a wait-free implementations of a repository that leaves out $o(n^2)$ registers not used for depositing.

%: paragraph mining names

\Paragraph{Mining names.}

Next, we consider the task to have processes work continuously to accumulate a possibly  unlimited collection of exclusive names.
The distributed system consists of $n$ processes prone to crashes and a number of shared objects.
We call designing an algorithm for this task the \emph{Mining-Names} problem.
The complete specification is as follows.

A positive integer~$i$ is considered to be \emph{assigned to process~$p$ as a name} when $p$ exclusively commits to integer~$i$ by writing~$i$  in a dedicated local write-once memory variable.
Exclusivity means that no two processes ever commit to the same integer, so a name can be interpreted as an exclusive reservation of a natural number.
Committing to a name resembles committing to a decision in solutions of Consensus.
After committing to a name, a process can proceed to commit to some other natural number as a name as well.
A process participating in acquiring new names never stops voluntarily.

An algorithm is a solution to Mining-Names if in each infinite execution the following two properties are satisfied:
\begin{description}
\item[\sf Naming:] 
No two different processes ever commit to the same integer as a shared name.

\item[\sf Utilization:] 
There are finitely many positive integers that never get acquired as names.
\end{description}
A mining names solution in a system with process crashes could be non-blocking or wait-free, which is understood as follows.
\begin{description}
\item[\sf Non-blocking:] 
For each  event in an execution, eventually a new name gets acquired.
\item[\sf Wait-free:] 
For each  process and any event in an execution, eventually this process acquires a new name.
\end{description}

Algorithms implementing a repository can be adapted to mining names. 
Namely, let every process keep invoking an operation to deposit a dummy value~$v$.
Rather than deposit $v$ in a register~$R[i]$, for some index~$i$, the process commits to the name~$i$.
After a new name has been acquired, the process invokes depositing a dummy value again.
This transformation from depositing to mining names requires the same distributed system that supports depositing.
In particular, the implementation of repository with properties summarized in Theorems~\ref{thm:selfish-deposit} and~\ref{thm:altruistic-deposit} assumes that an infinite array of shared read-write registers each initialized to \texttt{null} is available.

%: theorem  name mining solutions

\begin{theorem}
\label{thm:name-mining-solutions}

The Mining-Names problem can be solved by $n$ processes in a non-blocking fashion by an algorithm that leaves at most $2n-2$ nonnegative integers not assigned as names, or in a wait-free manner by an algorithm that leaves at most $(n+2)(n-1)$ integers never assigned as names.
\end{theorem}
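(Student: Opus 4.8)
The plan is to treat the statement as a corollary of Theorem~\ref{thm:selfish-deposit} and Theorem~\ref{thm:altruistic-deposit}, obtained by running the two depositing algorithms with the general naming mechanism described in the paragraphs preceding the theorem grafted on: in place of checking that the dedicated register $R_i$ is empty before a deposit, a process checks that all the suites $B_q$, $1\le q\le n$, still record $i$ as available for a name, and only then commits to $i$, updating $B_p$ (deleting $i$ from $L_p$, advancing $A_p$) and clearing $i$ from $W_p$, in that order. Call these the naming variants of \textsc{Selfish-Deposit} and of \textsc{Altruistic-Deposit}; the first yields the non-blocking claim and the second the wait-free claim. First I would fix, as part of the specification, that committing to $i$ is a single event, that the update of $B_p$ is completed before $p$ is permitted to overwrite $i$ in $W_p$, and that the reads of the $B_q$'s always occur strictly after the snapshot of $W$ that certified $i$ unique.

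The progress arguments carry over essentially verbatim. For the non-blocking variant, suppose only finitely many integers are ever assigned; then, as in the proof of Theorem~\ref{thm:selfish-deposit}, the nonfaulty processes keep verifying their lists until these lists all coincide with the set of the $2n-1$ smallest still-available integers, ranks stabilize, and from then on each choosing by rank yields an entry unique in the common list; such an entry is eventually proposed, certified unique by a snapshot of $W$, and --- since no $B_q$ can have deleted it, deletions accompanying only commits and there are none --- found available by every $B_q$, hence committed, a contradiction. For the wait-free variant, wait-freeness follows as in the proof of Theorem~\ref{thm:altruistic-deposit}: the underlying selfish layer produces fresh names forever, so a nonfaulty $p$ eventually refills every \texttt{null} in its row \texttt{Help}$[p,*]$, and a nonfaulty $q$ that wants a name eventually reads one out of its column \texttt{Help}$[*,q]$ and commits to it.

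The substantive point, and the step I expect to be the main obstacle, is \emph{exclusivity}: no integer $i$ is ever committed to by two distinct processes. The safety net of the deposit algorithms --- re-checking the permanent register $R_i$ --- is unavailable here, since only finitely many auxiliary registers are allowed and the $B_q$'s a process reads may be stale. The intended argument has two steps. First, for a fixed $i$, consider the naming intervals for $i$: the interval of each process running from the snapshot of $W$ certifying $i$ unique at its own component to the moment it overwrites $i$ in $W$. Linearizability of the atomic-snapshot object $W$ forces these intervals, over all processes, to be pairwise disjoint and hence totally ordered in time: if $p$'s certifying snapshot precedes $q$'s, then $p$ must already have cleared $i$ from $W_p$ by the time of $q$'s snapshot, as otherwise $q$'s snapshot would see $i$ at $W_p$ too. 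Second, let $p$ be the first process whose naming interval for $i$ ends with a commit to $i$; by the ordering just established, every other naming interval for $i$ that ends with a commit begins with a certifying snapshot occurring after $p$ has finished updating $B_p$, so the subsequent read of $B_p$ in that interval sees $i$ marked unavailable and the commit does not happen, while no naming interval preceding $p$'s ends with a commit by the choice of $p$. Hence $p$ is the unique committer of $i$. Making the timing chain airtight --- in particular handling processes that held $i$, failed the $B$-verification, and re-proposed, and the non-atomicity of updating the multi-register suite $B_p$ --- is the part needing care, and is what the remark ``if some process knows that $i$ is not available for naming, this is a record of some activity that occurred before $p$ took the snapshot'' is meant to cover.

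Finally, the two counting bounds are read off verbatim from the respective deposit theorems. For the non-blocking variant, each choosing by rank skips at most $n-1$ entries of the common list, and a worst-case execution may leave exactly these $n-1$ integers never assigned, matching Theorem~\ref{thm:selfish-deposit}. For the wait-free variant, the worst case is that all $n^{2}$ cells of \texttt{Help} hold names and then $n-1$ processes crash, leaving the survivor to consume only the $n$ cells of its own column, so the other $n(n-1)$ names are never assigned, matching Theorem~\ref{thm:altruistic-deposit}.
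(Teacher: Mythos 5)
Your proposal follows the paper's route exactly: the paper proves this theorem only by the informal discussion preceding it, namely grafting the shared suites $B_q$ onto the two deposit algorithms, replacing the emptiness check of $R_i$ by a scan of all $B_q$ after the certifying snapshot, and invoking the key observation that any record of $i$'s unavailability must predate the snapshot while $i$ is held uniquely in $W$. Your elaboration of exclusivity via disjoint, totally ordered naming intervals and the order ``update $B_p$, then overwrite $W_p$'' is a correct and more careful rendering of the same argument, and the counting bounds are inherited from Theorems~\ref{thm:selfish-deposit} and~\ref{thm:altruistic-deposit} just as the paper intends.
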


\begin{proof}
This follows from combining the general transformation of implementations of a repository to mining names and Theorems~\ref{thm:selfish-deposit} and~\ref{thm:altruistic-deposit}.
\end{proof}

Mining names can be used to implement a repository  by a general transformation, which works as follows.
Every process keeps mining names, and as soon as a new name~$i$ is acquired, this reserves the register~$R[i]$ for depositing.

%: theorem name mining non-blocking optimality

\begin{theorem}
\label{thm:name-mining-non-blocking-optimality}

For each non-blocking algorithm for Mining-Names, there exists an execution in which $n-1$ positive integers are not assigned as names to any process.
\end{theorem}

\begin{proof}
We apply the general transformation from an algorithm mining names to an implementaiton of a repository.
This transformation creates a non-blocking algorithms mining names from a non-blocking implementation of a repository.
If an algorithm for mining names could guarantee fewer than $n-1$ positive integers never assigned as names then this could be converted into a non-blocking solution for a repository that leaves out at most $n-1$ registers never used for deposits.
This would contradict Theorem~\ref{thm:non-blocking-repository}.
\end{proof}

The implementations of repository we developed have processes use newly acquired list names as indices of registers in an unbounded array of read-write registers.
This works only if a distributed system includes an unbounded array of shared read-write registers, each initialized to \texttt{null}.
The registers dedicated for depositing allow to keep track of indices of used registers, and to obtain next registers still available for deposits by the operation of updating lists.
It is an open problem if there exist algorithmic solutions to Mining-Names in an asynchronous  distributed system with finitely many shared read-write registers and processes prone to crashes.

%: bibliography 

\bibliographystyle{plain}

\bibliography{asynchronous-selection}

\end{document}